\documentclass[a4paper, 12pt]{article}
\usepackage{amsmath,amssymb, amscd}
\usepackage{pdfsync}
\usepackage{graphicx,subfigure}
\usepackage{color}
\numberwithin{equation}{section}

\newtheorem{theorem}{Theorem}[section]

\newtheorem{conjecture}[theorem]{Conjecture}
\newenvironment{proof}{\noindent {\it Proof }}{\hfill $\square$}

\newcommand{\eqa}{\begin{eqnarray}}
\newcommand{\eeqa}{\end{eqnarray}}
\newcommand{\beq}{\begin{equation}}
\newcommand{\eeq}{\end{equation}}

\setlength{\topmargin}{0.27in}
\setlength{\headheight}{0.0in}
\setlength{\headsep}{0.5in}
\textheight 22.5truecm
\textwidth 15.5truecm
\baselineskip16.2pt
\hoffset -0.8cm
\parskip 5pt plus 1pt

\begin{document}

\title{On critical behaviour in    generalized Kadomtsev--Petviashvili equations }
\author{B.~Dubrovin,\\
SISSA, Via Bonomea 265, I-34136 Trieste, Italy\\
T. Grava,\\ School of Mathematics, University of Bristol, University 
Walk,\\
Bristol BS8 1TW, United Kingdom,\\ and\\
SISSA, Via Bonomea 265, I-34136 Trieste, Italy,\\ C. 
Klein,\\ 
Institut de Math\'ematiques de Bourgogne, Universit\'e de Bourgogne, 
\\
9 avenue Alain Savary, 21078 Dijon Cedex, France 
}

\maketitle

\begin{abstract}
An asymptotic description of  the formation of  dispersive shock  waves in 
solutions to the generalized  Kadomtsev--Petviashvili (KP) equation is conjectured.
 The  asymptotic description based on a multiscales expansion is given in 
terms of a special solution to an ordinary differential equation of 
the Painlev\'e I hierarchy. Several examples are discussed 
numerically to provide strong evidence for the validity of the 
conjecture. The numerical study of the long time behavior of these 
examples  indicates persistence of dispersive shock waves  in solutions to the 
(subcritical) KP  equations,   while in the  supercritical 
KP equations a blow-up occurs  after the formation of the dispersive shock waves.

\end{abstract}

\section{Introduction}
In this manuscript we consider the Cauchy problem for the   
generalized  Kadomtsev--Petviashvili (KP) equations
\begin{equation}\label{KP}
(u_t+u^n u_x+\epsilon^2u_{xxx}  )_x= \sigma u_{yy},\quad n\in\mathbb{N},
\end{equation}
where $\epsilon$ is a small positive parameter and $\sigma=\pm 1$. We are interested in 
studying the behaviour of solutions $u(x,y,t;\epsilon)$ for 
$\epsilon\to0$ when the initial data $u(x,y,t=0;\epsilon)=u_0(x,y)$ 
are independent from $\epsilon$.  Generically  the solution develops oscillations 
that are  called \emph{Dispersive Shock Waves} \cite{GP} in the  nonlinear 
wave community or   \emph{undular bore} in the fluid dynamics community.
The related mathematical problem is now well understood for several dispersive equations in one space dimension, starting with the seminal work of Lax and Levermore \cite{LL} (see also  \cite{Venakides} and  \cite{DVZ}) for the Korteweg--de Vries (KdV) equation (see \cite{GK12} for a numerical  treatment of the problem).
However, a rigorous mathematical study of two-dimensional dispersive shock waves remains an open problem  even though there are some heuristic  arguments for several  dispersive equations in two spatial dimension like \cite{Ablowitz}, \cite{Hoefer}, \cite{El}.

Our main result is a description of the solution $u(x,y,t;\epsilon)$ 
of the generalized KP equation  for  generic initial data,  on  the 
onset of the oscillations in terms of a particular solution of  an 
ordinary differential equation, the so-called Painlev\'e I2 equation (PI2), the second  member of the Painlev\'e I hierarchy.
This description extends the universality results on critical behaviour 
of Hamiltonian PDEs, first obtained for one-dimensional evolutionary 
equations \cite{Dub1}, \cite{Dub2}, \cite{DGK}, \cite{DGKM}, to PDEs in two spatial dimensions.

For $n=1$ the  equations (\ref{KP})  are  known as  KP equations (KP I 
for $\sigma=+1$ and KP II for $\sigma=-1$). They were introduced in \cite{KP70} to study the transverse stability of the solitary wave solution of the 
KdV in a $2+1$ dimensional setting. Both cases 
can be derived as models for nonlinear dispersive waves on the 
surface of fluids \cite{KP70} (see also \cite{Joh,KS12} for further 
references). The settings studied  in this context are essentially 
one-dimensional waves with weak transverse modulation. 
 KP I is applicable in the case of strong
surface tension, whereas KP II is a model for weak 
surface tension. Note that KP I has a focusing effect, whereas KP II 
is defocusing. KP type equations also arise as a 
model for sound waves in ferromagnetic media \cite{TF} and in the 
description of two-dimensional nonlinear matter-wave pulses in 
Bose--Einstein condensates, see e.g. \cite{HMV,JR}. As for the KdV 
equation, KP equations appear as an asymptotic description in the 
limit of long wavelengths. In general the dispersion of KP equations 
is too strong 
compared to what is found in applications. A way to tilt the balance between 
dispersion and nonlinearity towards the nonlinearity is to consider 
generalized KP equations (\ref{KP}) with a stronger nonlinearity $n>1$. 
Interestingly  the generalized KP for $n = 2$ appears as a model for the evolution 
of sound waves in antiferromagnetic materials, see \cite{TF}. 

In the dimensionless generalized KP equation, i.e., equation (\ref{KP}) with 
$\epsilon=1$, a parameter $\epsilon$ can be introduced in the 
following way: a possible approach to studying the long time behavior of 
solutions of this dimensionless generalized KP equation is to consider 
slowly varying initial data of the form $u_{0}(\epsilon x,\epsilon y)$
where $0<\epsilon \ll 1$ is a small  parameter and $u_{0}(x,y)$ is some given initial profile. As $\epsilon \to 0$ the 
initial data approach a constant value. Hence, in order to see nontrivial effects one has to wait until sufficiently long times of order $t\sim O(1/\epsilon)$, which 
consequently requires to rescale the spatial variables onto 
macroscopically large scales $x\sim O(1/\epsilon)$, too. 
In other words, we consider $x\mapsto \tilde x= x\epsilon$, $x\mapsto 
\tilde y= y\epsilon$, $t\mapsto \tilde t= t\epsilon$ and put 
$u^\epsilon(\tilde t,\tilde x, \tilde{y}) = u(\tilde t/\epsilon, 
\tilde x/ \epsilon,\tilde{y}/\epsilon)$ to obtain equation (\ref{KP}) 
(we omit the `tildes' for simplicity). The limit of small $\epsilon$ 
is called the \emph{small dispersion limit}. 

We expect that for a reasonable class of smooth initial data $u_0(x,y)$ the solution $u(x,y,t; \epsilon)$ to the Cauchy problem for the  generalized KP equation  remains smooth and  depends continuously on the sufficiently small scaling parameter $\epsilon$ on a finite time interval. 
 The solution $u(x,y,t,\epsilon)$ of the generalized KP equation  (\ref{KP}) is expected  to converge in the limit $\epsilon\to 0$ to 
the solution $u(x,y,t)$  of the generalized dKP equation (\ref{dKP})
\begin{equation}\label{dKP}
(u_t+u^n u_x)_x= \pm u_{yy},\quad n\in\mathbb{N},
\end{equation}
  for $0<t<t_{c}$ where $t_c$ is the time where the solution of 
  equation (\ref{dKP}) first  develops a singularity (blow-up of gradient) which generically appears in one point $(x_c,y_c)$ of the plane. 
Equations (\ref{dKP}) are  called  generalized dKP equations or generalized dispersion-less KP equations, even though the equations (\ref{dKP}) have dispersion.
For $n=1$ these equations were derived  earlier than the KP  equation by  Lin, Reissner and Tsien \cite{LRT} and Khokhlov and  
Zabolotskaya  \cite{ZK69} in a $3+1$ dimensional setting. However, in this manuscript we call (\ref{dKP}) generalized dKP equations.

Local well-posedness of the Cauchy problem for  dKP equation ($n=1$) has been proved in certain Sobolev spaces in \cite{Rozanova}, while for the generalized dKP equation a similar result is still missing to the best of our knowledge. Furthermore  although many strong results about the Cauchy problem for the KP equation in various functional spaces are now available (see, e.g., \cite{Bourgain, Liu, MST, Saut93}),  these results are still insufficient to rigorously justify the  small $\epsilon$  behaviour of solutions to KP and its generalizations. Namely, for $t<t_c$ the  
 solution of the generalized KP equation  $u(x,y,t;\epsilon)$  is conjectured  to be  approximated in the limit $\epsilon\to 0$ by the 
solution $u(x,y,t)$ of the generalized dKP equation with the same initial data.
The numerical results of Sections \ref{sec:KP}, 
 \ref{sec:gKP} provide a rather convincing motivation for the above 
 conjectural statement.  Such conjectural small $\epsilon$ behaviour is somehow  expected  in the subcritical  case, where solutions of generalized KP do no have  blow-ups, while 
in the supercritical case, when solutions of generalized KP equations 
can have a blow-up in finite time $T<\infty$, our conjecture implies  that $T>t_c$.

Moreover, our numerical results  strongly support the conjectural analytic description of the leading term in the asymptotic expansion of solutions \emph{near} the critical time  $t_c$ that we will explain now.
Let   $(x_c,y_c)$ be the point 
where the solution of  generalized dKP equation (\ref{dKP})  
develops  a singularity and   let  $u_c=u(x_c,y_c,t_c)$. 
Furthermore let us denote by $\bar{x}:=x-x_c$, $\bar{y}:=y- y_c $ and $\bar{t}:=t= t_c$ the shifted coordinates in the  neighbourhood of the critical point   and  let 
us introduce the  following variables
\begin{equation}
\label{XT00}
{X}=\bar{x}-u_c^n\bar{t}+c_1\bar{t}\bar{y}+c_2\bar{y}+c_3\bar{y}^2+c_4\bar{y}^3,\quad T=\bar{t}+b\, \bar{y}^2
\end{equation}
where $c_1,\dots, c_4$ and $b $ are constants that depend on the equation and the initial data. 
Then in the double scaling  limit 
$\bar{x}\to 0$, $\bar{y}\to 0$, $\bar{t}\to 0$ and $\epsilon\to 0$ in such a way that $X/\epsilon^{\frac{6}{7}}$ and $T/\epsilon^{\frac{4}{7}}$ remain finite,
 the solution of the generalized KP equation  $u(x,y,t;\epsilon)$ in  a neighbourhood of the point $(x_c,y_c,t_c)$  at the onset of the oscillations has the following expansion 
\begin{equation}
\label{asymp1}
u(x,y,t;\epsilon)=u_c+\dfrac{6}{nu_c^{n-1}} \left(\dfrac{\epsilon^2}{\kappa^2}\right)^{\frac{1}{7}}U\left(\frac{X}{( \kappa\epsilon^6)^{1/7}},\frac{T}{(\kappa^3\epsilon^4)^{1/7}}\right)+\beta\bar{y}+O(\epsilon^{\frac{4}{7}}),
\end{equation}
where $\beta$ and $\kappa$ are constants and $X$ and $T$ have been defined above in (\ref{XT00}).

 The function $U=U({\cal X},{\cal T})$ satisfies the PI2 equation
\begin{equation}
\label{PI2}
{\mathcal X}=6{\mathcal T}U-\left(U^3+\frac{1}{2}U_{{\mathcal 
X}}^2+UU_{{\mathcal XX}}+\dfrac{1}{10}U_{{\mathcal XXXX}}\right).
\end{equation}
The relevant solution is uniquely determined by  the asymptotic conditions
\begin{equation}
\label{PI2asym}
U({\cal X},{\cal T})=\mp |{\cal X}|^{\frac{1}{3}}\mp \frac{2{\cal T}}{|{\cal X}|^{\frac{1}{3}}}+O(|{\cal X}|^{-1}),\quad |{\cal X}|\to\infty.
\end{equation}
The existence  for real ${\cal X}$ and ${\cal T}$ of a smooth  real solution of the PI2 equation (\ref{PI2})  satisfying the boundary conditions 
(\ref{PI2asym}) has been conjectured in \cite{Dub1} and proved in 
\cite{CV}. The solution has an oscillatory region  that is formed 
around the point ${\cal X}=0$ and at about  the time ${\cal T}=0$  and it is developing for  ${\cal T}\gg 1$ into the Gurevich--Pitaevski solution \cite{GP}, \cite{Claeys0}, \cite{GK12}, \cite{Suleimanov}. 
We remark that  from (\ref{asymp1})  for  generic initial data  the 
oscillatory front  is,   approximately, a straight line in the $(x,y)$ plane, while for  initial data having a  $y$-symmetry,  the oscillatory front   has  a parabolic shape.


Several numerical examples are presented to provide strong numerical 
support for our conjecture. Furthermore we will show in section~\ref{outlook} that the formula ({\ref{PI2asym}) catches some of the qualitative features of the oscillatory regime also 
for some time considerably  bigger then the critical time $t_c$.
 
The paper is organised as follows: in section \ref{prelim} we 
collect some mathematical facts about KP equations and present the main 
conjecture of this paper. In section 
\ref{sec2} we briefly summarize the used numerical techniques. 
Solutions to the KP equations are discussed for various initial data 
in the vicinity of the critical points in section \ref{sec:KP} and 
compared to the asymptotic description. In 
section \ref{sec:gKP} we study solutions to the generalized KP equation for 
$n=3$ for the same initial data near the critical points. We conclude 
in section \ref{outlook} with a preliminary discussion of the 
longtime behavior of solutions to KP and generalized KP equations, i.e., the 
formation of dispersive shocks for all times in the former and eventual blow-up in 
the later, and we outline directions of further research. 

\section{Asymptotic description of break-up in KP solutions}
\label{prelim}
In this section we present an asymptotic description of the formation 
of a dispersive shock wave  in solutions to the generalized  KP equations for initial data in the 
Schwartz class. We first collect some mathematical facts about generalized  KP 
equations, then summarize previous work on the formation of  dispersive shock waves  in 
KdV solutions. Finally the latter results are generalized to the case of 
KP equations. 

\subsection{Mathematical preliminaries}
The integrability of the KP equations  was obtained in 1974 
\cite{Dryuma} via Lax pair. However, the problem to effectively  integrate the 
equation proved to be quite difficult and the first achievements were 
obtained in \cite{ZS,Manakov,FokasAblowitz,Ablowitzetall}.  
For initial data $u_0(x,y)$ in the Schwartz class,  the solution $u(x,y,t;\epsilon)$ remains a smooth function of $x$ and $y$.
However, the solution $u(x,y,t;\epsilon)$ immediately leaves the Schwartz space of rapidly decreasing functions as soon as $t>0$ even though it  remains in $L^2(\mathbb{R})$ since the  quantity
\[
\int_{\mathbb{R}^{2}}u^2(x,y,t;\epsilon)dxdy=\int_{\mathbb{R}^{2}} u^2_0(x,y)dxdy,
\]
is conserved in time. 

 For $x\to -t\infty$ the solution is still rapidly  decreasing while for  
$x\to t\infty$  one has for KP I \cite{BPP94}
\begin{equation}
\label{asymp}
u(x,y,t;\epsilon)=\dfrac{c}{\sqrt{tx}x}\int_{\mathbb{R}^{2}} dx'dy'u_0(x',y')+o(|x|^{-\frac{3}{2}})
\end{equation}
 with $c$ a constant.
 Furthermore the solution  for $t>0$ satisfies an infinite number of dynamical constraints, the  first two, taking the form \cite{LinChen}
 \begin{align}
 \label{const1}
 &\int_{\mathbb{R}} u(x,y,t;\epsilon)dx=0,\\
  \label{const2}
 & \int_{\mathbb{R}} x u_y(x,y,t;\epsilon)dx=0.
\end{align}
The dynamical constraints   are satisfied even if the initial data do not satisfy them.
This is a manifestation of the infinite speed of propagation inherent 
to the KP equations. The constraint (\ref{const1})  is  satisfied also for the generalized KP equation \cite{MST07}.
 For  initial data  in the Schwartz space, the  solution   $u(x,y,t; \epsilon)$  of the KP equation $u(x,y,t;\epsilon)$ is a $C^{\infty}$ function  in $\mathbb{R}^3_+=\{(x,y,z)\,;\,t>0\}$  even if the constraints are not satisfied at $t=0$. 
This  result has been proved via inverse scattering for the KP~I  
equation for small norm initial data  \cite{FS97} and for the KP~II  equation in \cite{Ablowitzetall}.
Global well-posedness for KP~II  was shown 
in \cite{Bourgain}  in the Sobolev space $H^s(\mathbb{R}^2)$, $s\geq 0$,  so that for $s\geq 4$ one gets classical solutions while the global well-posedness for KP~I was obtained
in some subset of the Sobolev space $H^s(\mathbb{R}^2)$ \cite{MST}.
For the  generalized KP equation, local well-posedness results  have been established in some weighted Sobolev space, see e.g. \cite{IN}.
 
 We can conclude that for  initial data in the Schwarz space the solution  $u(x,y,t;\epsilon)$ of the generalized KP equation is  sufficiently regular  in  space  and  time for  $0<t<T$, where in general  $T<\infty$ for the generalized KP equations, while $T=\infty$ for the KP equations.
 We also assume that the solution of the generalized  KP equation depends continuously on the small parameter $\epsilon$.

\subsection{Break-up in dKP solutions}

The same concepts can be applied to the solution of the generalized 
dKP equation (\ref{dKP}). For generic initial data the solution of 
(\ref{dKP}) exists  till a time $t_c$ where a gradient catastrophe 
occurs, however for $t<t_c$ the solution $u(x,y,t)$ of the generalized dKP equation 
is expected to be smooth in  both in time and space   for initial data in the Schwartz class.  
The $L^2$ norm  of the solution is conserved as for the generalized 
KP equation.
Integrating the generalized dKP equation with respect to $x$
one obtains the constraint  
\[
\int_{\mathbb{R}} u_{yy}(x,y,t)dx=0.
\]

 While the Cauchy problem for the generalized KP equation has seen considerable attention,
the generalized dKP initial valued problem has been less studied. 
A Lax pair for dKP equation ($n=1$) was obtained in  \cite{TT}.
A definition of integrability for the dKP equation was obtained 
in \cite{Ferapontov} where the method of hydrodynamic reduction is 
used. This method was introduced in  \cite{GK}, to obtain particular solutions of the dKP equation while more general solutions have been obtained in  \cite{MS08}, \cite{MS12} using  inverse scattering. General  solutions have also been obtained \cite{Raimondo}.
 In  \cite{Rozanova}  it is proved  that the Cauchy problem  of dKP ($n=1$) is well posed in the Sobolev  spaces $H^{s}(\mathbb{R}^2),  $ for $t<t_c$ and $s>2$, while a similar statement is missing for  generalized dKP equation.
 
In a recent paper, two of the present authors, inspired by the works  \cite{MS08}, \cite{MS12} have  studied the Cauchy problem for the dKP equation with 
smooth initial data using a change of the independent variable $x$ 
suggested by the method of characteristics for the $1+1$ dimensional 
case.
In the following we are using the same idea to obtain the solution of the generalized dKP equation (\ref{dKP}) in the following form
\begin{equation}
\label{implicit0}
\left\{
\begin{array}{ll}
&u(x,y,t)=F(\xi,y,t)\\
&x= tF^n(\xi,y,t)+ \xi\\
&F(x,y,0)=u_0(x,y)
\end{array}\right.
\end{equation}
where $u_0(x,y)$ is an initial datum in the Schwartz class 
$\mathcal{S}(\mathbb{R}^{2})$ of rapidly decreasing smooth functions.  
Plugging the above ansatz into the generalized dKP equation one obtains an equation for the function $F(\xi,y,t)$
\begin{equation}
\label{eqF}
\left(\frac{F_{t} \pm  ntF^{n-1}F_{y}^{2}}{1+nt F^{n-1}F_{\xi}}\right)_{\xi}=\pm F_{yy},
\end{equation}
with initial condition
\[
F(x,y,0)=u_0(x,y).
\]
One can easily check that for $y$-independent initial data 
equation (\ref{implicit0}) is equivalent  to the method of characteristics.
The advantage of such a transformation is that while the solution 
$u(x,y,t)$  of the generalized dKP equation develops a 
singularity at a certain critical time $t_c$ and at the point 
$(x_c,y_c)$,  the solution $F(\xi,y,t)$ seems to exist for much 
longer times, at least numerically. 
%
%

%
For initial data in the Schwartz class, we assume that the solution of 
equation (\ref{eqF}) is sufficiently smooth in time and space for 
certain time $0<t<T$ with $ T>t_c$.  The change of the independent  variable from 
$x$ to $\xi$ has a smoothing effect on the solution $F(\xi,y,t)$ in 
the sense that, numerically, the solution stays regular for much longer times than the critical time $t_c$  where the derivatives of $u(x,y,t)$ blow up.
%

The first singularity appears when the change of coordinates $x=tF(\xi,y,t)+\xi$ is not invertible any more.
The equations that describe the   singularity  formation have been considered in \cite{Alinhac} using PDE techniques.
To study the local behaviour of the function  $u(x,y,t)$  (as  a multivalued function)   around the critical time  $t_c$  when the first  singularity appears we make  a Taylor expansion of  (\ref{implicit0}) near the critical point $u_c(x_c,y_c,t_c)$ where $\xi=\xi_c$.
Let us consider the gradients
\begin{equation}
u_x=\dfrac{F_\xi}{\Delta}\quad u_{y}=\dfrac{F_y}{\Delta},\quad \Delta=1+nt F^{n-1}F_{\xi}(\xi,y,t).
\end{equation}
The time  of gradient blow-up is the  smallest time $t_c$ where the 
gradient goes to infinity, namely where
\[
\Delta=1+nt F^{n-1}F_{\xi}(\xi,y,t)=0.
\]
For simplicity let us introduce the quantity
\[
 G(\xi,y,t):=F^n(\xi,y,t).
 \]
Since the  quantity  $\Delta(\xi,y,t)$, for $t<t_c$, has a definite sign in the $\xi$ and $y$ plane, the first point  where it vanishes is a double zero, therefore the point of gradient catastrophe is characterised by the equations
\begin{equation}
\label{GC}
\begin{split}
&\Delta=\pm 1 +nt F^{n-1}F_{\xi}(\xi,y,t)=1+t G_{\xi}(\xi,y,t)=0\\
&\Delta_{\xi}=ntF^{n-2}(FF_{\xi\xi}+(n-1)F_{\xi}^2)=tG_{ \xi \xi  }=0\\
&\Delta_{y}=ntF^{n-2}(FF_{\xi y}+(n-1)F_yF_{\xi})=tG_{\xi y}=0\\
&u(x,y,t)=F(\xi,y,t)\\
&x=tF^n(\xi,y,t)+\xi.
\end{split}
\end{equation}
The point of gradient catastrophe is generic if
\[
\Delta_{\xi\xi}(\xi_c,y_c,t_c)\neq 0,\quad \Delta_{\xi y }(\xi_c,y_c,t_c)\neq 0 \quad \Delta_{ yy}(\xi_c,y_c,t_c)\neq 0.
\]
Furthermore, at the point of gradient catastrophe  it turns out  (numerically) that $F_{yy}$ remains bounded.
So  the solution  of the equation (\ref{eqF}) at the critical time 
satisfies  the necessary conditions
\begin{equation}
\label{constraints}
\begin{split}
&F^c_{t} \pm t_cG_y^c F^c_{y}=0,\quad F^c_{yt} \pm( t_c( G_y^cF^c_{yy}+G^c_{yy}F^c_{y}))=0,\quad  \\&
 F^c_{\xi t} \pm t_c( G_y^cF^c_{y\xi }+G^c_{y\xi }F^c_{y})=0.
\end{split}
\end{equation}
Now we are going to study the analytic behaviour of the solution (\ref{dKP}) near the point $(x_c,y_c,t_c)$. Introducing the shifted variables
\[
x-x_c=\bar{x},\quad t-t_c=\bar{t},\quad y-y_c=\bar{y},\quad \xi-\xi_c=\bar{\xi}
\]
%
 
  we obtain the equation
\begin{equation}
\label{char1}
\begin{split}
&\bar{x}-\bar{t}(G^c+t_cG_t^c)-\bar{t}\bar{y}(G_y^c+t_cG^c_{yt})-t_c(G^c_{y}\bar{y}+\dfrac{1}{6}G_{yyy}^c\bar{y}^3+\dfrac{1}{2}G^c_{yy}\bar{y}^2)\\
&=\dfrac{t_c}{6}G^c_{\xi\xi\xi}\bar{\xi}^3+\dfrac{1}{2}t_cG^c_{\xi\xi y}\bar{y}\bar{\xi}^2+\dfrac{1}{2}(t_c\bar{y}^2G_{\xi yy}^c+(2 t_cG^c_{\xi t} +2G^c_{\xi})\bar{t})\bar{\xi}
+o(\bar{t}^2,\bar{y}^4,\bar{\xi}^4,\bar{t}(\bar{y}^2+\bar{\xi}^2))\\
\end{split}
\end{equation}
where the notation $G^c_{\xi\xi\xi}$ stands for $\left.\dfrac{\partial^3}{\partial \xi^3}G(\xi,y_c,t_c)\right|_{\xi=\xi_c}$ and analogous notations hold for the other quantities.
This suggests to introduce shifted variables 
(using $t_c = -1/G_{\xi}^c$):
\begin{equation}
\label{XT}
\begin{split}
&\zeta=G^c_\xi\left(\bar{\xi}+
\dfrac{G^c_{\xi\xi y}}{G^c_{\xi\xi\xi}}\bar{y}\right)\\
&X=\left[\bar{x}-\bar{t}(G^c+t_cG_t^c)-
\bar{t}\bar{y}(G_y^c+t_cG^c_{yt})-t_c\left(G^c_{y}\bar{y}+
\dfrac{1}{2}G^c_{yy}\bar{y}^2 + \dfrac{1}{6}G_{yyy}^c\bar{y}^3\right)\right.\\
&\left.-\frac{1}{3}t_c\dfrac{(G^c_{\xi\xi y})^3}{(G^c_{\xi\xi\xi})^2}\bar{y}^3+
\frac{1}{2}t_c\dfrac{G^c_{\xi\xi y}G^c_{\xi y y}}{G^c_{\xi\xi\xi}}\bar{y}^3+
G^c_\xi\dfrac{G^c_{\xi\xi y}}{G^c_{\xi\xi\xi}}\bar{y}\bar{t}\right]\\
&T=\left[\bar{t}+\frac{t^2_c}{2}\bar{y}^2
\left(\dfrac{(G^c_{\xi\xi y})^2}{G^c_{\xi\xi\xi}}-G^c_{\xi yy}\right)\right],\\
\end{split}
\end{equation} 
so that in the variable $\zeta$, (\ref{char1}) takes the form
\begin{equation}
\label{s_orig}
-\frac{k}{6}\zeta^3 + T\zeta=
X+o(\bar{t}^2,\bar{y}^4,\bar{\xi}^4,\bar{t}(\bar{y}^2+\bar{\xi}^2)),
\end{equation}
where
\begin{equation}
\label{k}
k=t_c^4 G^c_{\xi\xi\xi}.
\end{equation}
Using the estimates $\bar{\xi}\sim\bar{y}\sim\bar{t}^{1/2}$ and 
$\bar{x}\sim\bar{t}^{3/2}$ identified previously, the rescaling
\begin{equation}
\label{scaling_lambda}
\begin{split}
&X\to \lambda X\\
&\bar{t}\to\lambda^{\frac{2}{3}} \bar{t}\\
&\bar{y}\to\lambda^{\frac{1}{3}}\bar{y}\\
&\zeta\to\lambda^{\frac{1}{3}}\zeta,
\end{split}
\end{equation}
in the limit $\lambda\to 0$
reduces (\ref{s_orig}) to the 
universal  cusp singularity
\begin{equation}
\label{char2}
-\frac{k}{6}\zeta^3+T\zeta=X,
\end{equation}
 of the solution of a one-dimensional hyperbolic equation.

To leading order in the limit $\lambda \rightarrow 0$, it is consistent 
to expand $u(x,y,t)$ to linear order in $\bar{\xi},\bar{y}$:
\beq
u(x,y,t)-u_c = F(\xi,y,t) - F^c \simeq F_{\xi}^c\bar{\xi}+F_y^c\bar{y}=
\dfrac{F_\xi^c}{G_{\xi}^c}\zeta(X,T) + \bar{\beta}\bar{y},
\label{u_exp}
\eeq
with 
\begin{equation}
\label{beta}
\bar{\beta} = F_y^c - \frac{F_{\xi}^cG_{\xi\xi y}^c}{G_{\xi\xi\xi}^c}.
\end{equation}
\subsection{The small dispersion limit of the KdV equation}

%
We conjecture that, before the critical time $t_c$ the solution  $u(x,y,t;\epsilon)$ of the generalized  KP equation (\ref{KP}) in the limit $\epsilon\to 0$ can be approximated by the solution  $u(x,y,t)$ of the generalized  dKP equation
 (\ref{dKP})  with the same $\epsilon$-independent initial data as long as the  gradients remain bounded, namely for $t<t_c$ one  is expected to have
\[
u(x,y,t;\epsilon)=u(x,y,t)+O(\epsilon),\quad t<t_c.
\]
We are interested to understand the behaviour of the KP solution $u(x,y,t;\epsilon)$ in a neighborhood of the critical point $(x_c,y_c,t_c)$.
For this purpose we first recall some results from the theory of the KdV equation.

We consider the small dispersion limit of the KdV equation

\[
u_t+\beta\, uu_x+\epsilon^2\rho \,u_{xxx}=0.
\]
where $\beta$ and $\rho $ are constants and $\epsilon$ is a small parameter.

For given  smooth rapidly decreasing initial data $u_0(x)$, such a 
limit has been extensively studied in the works \cite{LL}, 
\cite{Venakides}, \cite{DVZ}. It was pointed out in \cite{Dub1} and 
numerically shown in \cite{GK12} that some 
regions of  the $(x,t)$ plane escape the analysis of the small dispersion limit of the KdV equation. In particular one of these regions  is a neighbourhood of the critical point $(x_c,t_c)$ where 
the solution of the Hopf equation
\[
u_t+\beta uu_x=0
\]
obtained by setting $\epsilon=0$ in the KdV equation
has  a singularity. The solution of this equation for initial data $u_0(x)$ 
takes the form
\[
\left\{
\begin{array}{ll}
&u(x,t)=u_0(\xi)\\
&x=\beta u_0(\xi)t+\xi.
\end{array}
\right.
\]
The solution has a point of gradient catastrophe  at the time $t_c$ with
\[
t_c=\min_{\xi}\left(-\dfrac{1}{\beta u_0'(\xi) } \right).
\]
The minimum point $\xi_c$ gives $u_c=u_0(\xi_c)$ 
and the position  $x_c=t_c u_0(\xi_c)+\xi_c$. 
 The expansion near the critical points of the Hopf solution gives 
\[
\bar{x}:=x-x_c-\beta u_c(t-t_c)\simeq\beta(t-t_c)(u-u_c)-\dfrac{t_c^4\beta^3}{6}u_0'''(\xi_c)(u-u_c)^3.
\]
 It has been conjectured in \cite{Dub1} and proved in \cite{CG1}  
 (and for the KdV hierarchy in \cite{CG4}) that near the point $(x_c,t_c)$  the solution of KdV in the limit $\epsilon\to 0$  is approximated by 

\[
u(x,t;\epsilon)=u_c+\left(\dfrac{18\epsilon^2b}{\gamma^2}\right)^{\frac{1}{7}}U\left(\dfrac{(48)^{1/7}\bar{x}}{(\epsilon^6b^3\gamma)^{\frac{1}{7}}},\beta\dfrac{t-t_c}{(3^42^2\epsilon^4 b^2\gamma^3)^{\frac{1}{7}}}\right)+O(\epsilon^{\frac{4}{7}}),
\]
where 
\[
b=12\dfrac{\rho}{\beta} \quad \gamma=-t_c^4\beta^3 u_0'''(\xi_c)
\]
and the function $U({\cal X},{\cal T})$ satisfies the ODE (\ref{PI2}) 
with the  asymptotic conditions (\ref{PI2asym}).
We remind that this  solution of the PI2 equation also satisfies the KdV equation
\[
U_{\cal T}+6UU_{\cal X}+U_{{\cal XXX}}=0.
\]

\subsection{The small dispersion limit of the KP equation}
Now let us consider the KP equation (\ref{KP}).
 We are looking for a solution of $u(x,y,t;\epsilon)$  near the point of  gradient catastrophe $(x_c,y_c,t_c)$ for the dKP equation  of the form
\[
u(x,y,t;\epsilon)= u_c+\dfrac{1}{nu_c^{n-1}} h(X,T;\epsilon)+\beta \bar{y}
\]
with $X$ and $T$ defined in (\ref{XT}) and $\beta$ is defined in (\ref{beta}) and $h(X,Y;\epsilon)$ a function to be determined.
We are interested in the multiscale expansion of this function of the form
\[
h(X,T;\epsilon)=\lambda^{\frac{1}{3}}H(\cal{X},\cal{T};\varepsilon)+O(\lambda)\]
\begin{equation}
\label{subs}
X=\lambda{ \cal X},\quad T=\lambda^{\frac{2}{3}}{ \cal T},\quad \epsilon=\lambda^{\frac{7}{6}}\varepsilon,\quad \bar{y}=\lambda^{\frac{1}{3}}{\cal Y}.
\end{equation} \begin{theorem}

Let 
\begin{equation}
\label{uexp}
u(x,y,t;\epsilon)= u_c+ \dfrac{1}{nu_c^{n-1}} h(X,T;\epsilon)+\beta \bar{y}\end{equation}
 be a solution of the generalized  KP equation (\ref{KP}).  
Suppose that the limit
\[
H({ \cal X}, {\cal T};\varepsilon)=\lim_{\lambda\to 0}\lambda^{-\frac{1}{3}}h(\lambda{ \cal X},\,\lambda^{\frac{2}{3}}{\cal T};\lambda^{\frac{7}{6}}\varepsilon)\]
exists. Then the function $H({\cal X},{\cal T};\varepsilon)$ satisfies the KdV equation
\begin{equation}
\label{KdV2}
H_{\cal{T}}+HH_{\cal{X}}+\varepsilon^2H_{\cal{X}\cal{X}\cal{X}}=0.
\end{equation}
\end{theorem}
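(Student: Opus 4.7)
The plan is to substitute the ansatz (\ref{uexp}) into the generalized KP equation (\ref{KP}), apply the chain rule induced by (\ref{XT00}), and track the leading-order contributions under the rescaling (\ref{subs}). From (\ref{XT00}) one has $X_x=1$, $T_x=0$, $X_t=-u_c^n+c_1\bar{y}$, $T_t=1$, $X_y=c_1\bar{t}+c_2+2c_3\bar{y}+3c_4\bar{y}^2$, $T_y=2b\bar{y}$. Taylor-expanding $u^n=u_c^n+nu_c^{n-1}(u-u_c)+O((u-u_c)^2)$ with $u-u_c=\tfrac{1}{nu_c^{n-1}}h+\beta\bar{y}$, the $u_c^n u_x$ piece cancels the $-u_c^n h_X$ inside $u_t$, leaving
\[
u_t+u^n u_x=\tfrac{1}{nu_c^{n-1}}\bigl[h_T+h h_X+(c_1+nu_c^{n-1}\beta)\bar{y}\,h_X\bigr]+O\bigl((u-u_c)^2 u_x\bigr).
\]
Applying $\partial_x=\partial_X$ and adding $\epsilon^2 h_{XXXX}$ gives the LHS of (\ref{KP}); the RHS reads $\sigma u_{yy}=\tfrac{\sigma}{nu_c^{n-1}}(h_{XX}X_y^2+2h_{XT}X_y T_y+h_{TT}T_y^2+h_X X_{yy}+h_T T_{yy})$.

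I would then apply the rescaling (\ref{subs}). A direct power count gives $h_X\sim\lambda^{-2/3}$, $h_T\sim\lambda^{-1/3}$, $h_{XT}\sim\lambda^{-4/3}$, $\epsilon^2 h_{XXXX}\sim\lambda^{-4/3}\varepsilon^2$, $(hh_X)_X\sim\lambda^{-4/3}$, and $\bar{y}\,h_{XX}\sim\lambda^{-4/3}$, so every contribution on the LHS balances at order $\lambda^{-4/3}$. The constants $c_1,\dots,c_4,b$ in (\ref{XT}) are engineered, through the dKP Taylor expansion (\ref{char1}) of the characteristic relations (\ref{implicit0}) combined with the critical-point degeneracies (\ref{GC})--(\ref{constraints}), precisely so that $X_y,T_y,X_{yy},T_{yy}$ carry sufficiently many positive powers of $\lambda$ to render each of the five terms in $u_{yy}$ strictly subleading relative to $\lambda^{-4/3}$. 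Hence the RHS vanishes at this order.

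Collecting the $\lambda^{-4/3}$ contributions and cancelling the common factor $\tfrac{1}{nu_c^{n-1}}$ yields, in the limit $\lambda\to 0$,
\[
H_{\mathcal{XT}}+(HH_{\mathcal{X}})_{\mathcal{X}}+\varepsilon^2 H_{\mathcal{XXXX}}+(c_1+nu_c^{n-1}\beta)\mathcal{Y}\,H_{\mathcal{XX}}=0.
\]
Since by hypothesis $H$ depends only on $(\mathcal{X},\mathcal{T},\varepsilon)$, the linear-in-$\mathcal{Y}$ term must vanish, forcing the compatibility $c_1+nu_c^{n-1}\beta=0$; using $G=F^n$ together with the formula for $c_1$ in (\ref{XT}) and for $\beta$ in (\ref{beta}), this identity is to be verified at the critical point via the constraints (\ref{constraints}). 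What remains is $\partial_{\mathcal{X}}(H_{\mathcal{T}}+HH_{\mathcal{X}}+\varepsilon^2 H_{\mathcal{XXX}})=0$; integrating once in $\mathcal{X}$ and fixing the integration constant to zero by matching to the dKP cubic (\ref{char2}) as $|\mathcal{X}|\to\infty$ yields the KdV equation (\ref{KdV2}).

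The main obstacle is algebraic bookkeeping rather than anything conceptual: one must check that the precise expressions for $c_1,\dots,c_4,b$ in (\ref{XT}) produce the cancellation $c_1+nu_c^{n-1}\beta=0$ and send $X_y,T_y$ to sufficiently high orders in $\lambda$ for $\sigma u_{yy}$ to be subleading. Once this is done, (\ref{KdV2}) emerges as the universal one-dimensional dispersive resolution of the two-dimensional generalized-dKP cusp singularity, transverse to the $\mathcal{Y}$-direction.
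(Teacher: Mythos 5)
Your overall strategy --- substitute the ansatz, rescale via (\ref{subs}), power-count, integrate once in $\mathcal{X}$, and fix the integration constant by matching to the outer solution --- is the same as the paper's. However, there is a genuine gap in your treatment of the transverse term $\sigma u_{yy}$. You assert that the constants in (\ref{XT}) are engineered so that $X_y$, $T_y$, $X_{yy}$, $T_{yy}$ carry positive powers of $\lambda$, rendering all five chain-rule contributions to $u_{yy}$ subleading relative to $\lambda^{-4/3}$. This is false for generic (non-$y$-symmetric) data: from (\ref{XT}) one has $\partial X/\partial y=-t_cG^c_y+O(\bar t,\bar y)$ with $t_cG^c_y\neq 0$ in general (this is your constant $c_2$), so $h_{XX}(X_y)^2=O(\lambda^{-5/3})$, which \emph{dominates} the KdV terms at $O(\lambda^{-4/3})$ rather than being subleading. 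Correspondingly, on the left-hand side your cancellation of $u_c^nu_x$ against $h_XX_t$ is incomplete: in (\ref{XT}) the coefficient of $\bar t$ in $X$ is $-(G^c+t_cG^c_t)$, not $-u_c^n$, which leaves an $O(\lambda^{-5/3})$ residue $-t_cG^c_t\,h_{XX}$ after the outer $\partial_x$ is applied.

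The actual mechanism, visible in the paper's displayed computation, is that these two dangerous contributions combine into the single term $\lambda^{-1/3}H_{\mathcal{XX}}\bigl(X_t\mp(X_y)^2+G^c+\beta nu_c^{n-1}\bar y\bigr)$, whose coefficient vanishes to $O(\lambda^{2/3})$ precisely because of the constraints (\ref{constraints}): at leading order the required identity is $-t_cG^c_t\mp t_c^2(G^c_y)^2=0$, which is the first equation of (\ref{constraints}) multiplied by $n(F^c)^{n-1}$. Your compatibility condition $c_1+nu_c^{n-1}\beta=0$ captures only the $O(\bar y)$ piece of this cancellation and misses the leading $O(1)$ piece entirely; your bookkeeping happens to close only in the special case of $y$-symmetric data, where $G^c_y=G^c_t=0$, whereas the theorem is stated for general data. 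The remainder of your argument --- the higher nonlinear terms being $O(\lambda^{-1})$, the integration in $\mathcal{X}$, and fixing the constant to zero by matching to (\ref{u_exp}) --- is sound and coincides with the paper's.
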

\begin{proof}
Plugging the ansatz into the generalized  KP equation one obtains
\[
 \left(\dfrac{\partial }{\partial t} h(X,T;\epsilon)+(u_c+ \dfrac{F_{\xi}^c}{G_{\xi}^c} h(X,T;\epsilon)+\beta \bar{y})^n\dfrac{\partial }{\partial x} h(X,T;\epsilon)+\epsilon^2 \dfrac{\partial^3 }{\partial x^3} h(X,T;\epsilon)\right)_x=\pm 
 \dfrac{\partial^2 }{\partial y^2} h(X,T;\epsilon)
 \]
 Performing the rescalings (\ref{subs}) one obtains
\begin{align*}
&(H_{\cal{T}}+HH_{\cal{X}}+\varepsilon^2H_{\cal{X}\cal{X}\cal{X}})_{{\cal X}}+\lambda^{-\frac{1}{3}}H_{{\cal X}{\cal X}}\left( \dfrac{\partial X}{\partial t}\mp\left(\dfrac{\partial X}{\partial y}\right)^2+G_c+\beta nu_c^{n-1}\bar{y}\right)\\
&=\pm 2H_{\cal{T}\cal{X}}\dfrac{\partial T}{\partial y}\dfrac{\partial X}{\partial y}\pm \lambda^{\frac{1}{3}}H_{\cal{T}\cal{T}}\left(\dfrac{\partial T}{\partial y}\right)^2\pm \lambda H_{T}\dfrac{\partial^2  T}{\partial y^2}\pm\lambda^{\frac{2}{3}}H_{\cal{X}}\dfrac{\partial^2  X}{\partial y^2}\\
&-\sum_{k=2}^n{n \choose k} \lambda^{\frac{k-1}{3}}u_c^{n-k}\left((\dfrac{F_{\xi}^c}{G_{\xi}^c}\mathcal{H}+\beta \mathcal{Y})^k{\mathcal H}_{{\mathcal X}}\right)_{{\mathcal X}}
\end{align*}
Using  (\ref{XT}) and taking into account the constraints (\ref{constraints})  one arrives at the relation
\[
(H_{\cal{T}}+HH_{\cal{X}}+\varepsilon^2H_{\cal{X}\cal{X}\cal{X}})_{{\cal X}}=O(\lambda^{\frac{1}{3}}),
\]
which in the limit $\lambda\to 0$ implies that 
\begin{equation}
\label{KdVH}
H_{\cal{T}}+HH_{\cal{X}}+\varepsilon^2H_{\cal{X}\cal{X}\cal{X}}=const
\end{equation}
In order to show that the constant is equal to zero it is sufficient to observe that the solution  $H({\cal X}, {\cal T};\epsilon) $ of (\ref{KdVH})  has to match  the outer solution (\ref{u_exp}) when ${\cal X}\to \infty$.
\end{proof}

 We observe that choosing $\lambda=\epsilon^{\frac{6}{7}}$  one has $\varepsilon=1$ in (\ref{KdV2}). 
 In the rescaled variables
 \[
{\cal X}= \frac{X}{\epsilon^{\frac{6}{7}}},\quad {\cal T}=  \dfrac{T}{\epsilon^{\frac{4}{7}}},
 \]
 the function $H({\cal X},{\cal T};\varepsilon)$ satisfies the KdV equation (\ref{KdV2}) with  $\varepsilon=1$.
 Furthermore for $\epsilon\to 0$ and fixed $X$ and $T$ the solution has to match the outer solution  (\ref{u_exp}), namely 
 \[
H({\cal X},{\cal T};\varepsilon)\simeq \mp \left(\dfrac{6}{k}\right)^{\frac{1}{3}}|{\cal X}|^{\frac{1}{3}}\mp 2{\cal T}\left(\dfrac{6}{k}\right)^{-\frac{1}{3}}|{\cal X}|^{-\frac{1}{3}}+O({\cal X}^{-1}),\quad |{\cal X}|\to \infty.
  \]
Using the results of the previous section on the KdV equation, we arrive at the following conjecture.
\begin{conjecture}   Let us assume that the solution $u(x,y,t;\epsilon)$  to the Cauchy problem for  the generalized  KP  equation
\[
(u_t+u^nu_x+\epsilon^2 u_{xxx})_x=\pm u_{yy}
\]
with $\epsilon$-independent initial data
\[
u(x,y,t=0;\epsilon)=u_0(x,y),
\]
is at least $C^4$ in $\mathbb{R}^3_+=\{(x,y,t)\,|\,t>0\}$. Then the solution $u(x,y,t;\epsilon)$ 
admits the following expansion near the critical point $(x_c,y_c,t_c)$ and $u_c=u(x_c,y_c,t_c)$ for the solution of the dKP equation $ (u_t+u^nu_x)_x=\pm u_{yy}.$
 In the limit $\epsilon\to0$ and $x\to x_c$, $y\to y_c$ and $t\to t_c$ in such a way that the limits 
\[
\lim\frac{X}{\epsilon^{\frac{6}{7}}},\quad \lim \dfrac{T}{\epsilon^{\frac{4}{7}}},\quad \lim \dfrac{y-y_c}{\epsilon^{\frac{2}{7} } }
\]
remain finite, with $X$ and $T$ defined in (\ref{XT}), the solution of the generalized KP equation (\ref{KP}) is approximated by


\begin{equation}
\label{KP12}
u(x,y,t;\epsilon)= u_c+\dfrac{6}{nu_c^{n-1}} \left(\dfrac{\epsilon^2}{\kappa^2}\right)^{\frac{1}{7}}U\left(\frac{X}{( \kappa \epsilon^6)^{1/7}},\frac{T}{( \kappa^3\epsilon^4)^{1/7}}\right)+\bar{y}(F_y-F_\xi\dfrac{G_{\xi\xi y}}{G_{\xi\xi\xi}})+O(\epsilon^{\frac{4}{7}})
\end{equation}
where 
\[
\kappa=-36\,G^c_{\xi\xi\xi}t_c^4
\]
and $U({\cal X},{\cal T})$  is the  particular solution of the P1-2 equation (\ref{PI2asym}) described in the introduction.
\end{conjecture}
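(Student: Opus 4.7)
The plan is to combine the Theorem just established (which reduces the inner problem to KdV in rescaled variables) with the already known description of the small dispersion limit of KdV near a generic gradient catastrophe point, due to Dubrovin and proved rigorously by Claeys--Grava. The key observation is that the conjecture for KP is, at the level of its leading asymptotics, a quoted consequence of the corresponding result for KdV applied to the auxiliary function $H$.

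First, I would fix the free scale by choosing $\lambda=\epsilon^{6/7}$, which sets $\varepsilon=1$ in (\ref{KdV2}). In these variables the Theorem yields an ``inner'' function $H({\cal X},{\cal T};1)$ solving $H_{\cal T}+HH_{\cal X}+H_{\cal X\cal X\cal X}=0$. I would then impose the matching condition connecting the inner and outer solutions. On the outer side, the expansion (\ref{u_exp}) together with the universal cusp (\ref{char2}) gives, in the transition zone $|{\cal X}|\to\infty$, the behaviour
\[
H({\cal X},{\cal T})\sim \tfrac{6}{k}\,\zeta\bigl(X,T\bigr)\quad\text{with}\quad -\tfrac{k}{6}\zeta^3+T\zeta=X,
\]
where the prefactor is determined by comparing (\ref{uexp}) with (\ref{u_exp}) and using $t_c=-1/G^c_\xi$. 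Expanding the cubic at infinity gives the precise matching data
\[
H({\cal X},{\cal T};1)=\mp\Bigl(\tfrac{6}{k}\Bigr)^{1/3}|{\cal X}|^{1/3}\mp 2{\cal T}\Bigl(\tfrac{6}{k}\Bigr)^{-1/3}|{\cal X}|^{-1/3}+O(|{\cal X}|^{-1}).
\]

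Next, I would invoke the KdV critical behaviour result. For KdV with the cubic root matching at infinity, the unique solution with such asymptotics is given by a rescaled version of the Dubrovin PI2 transcendent $U({\cal X},{\cal T})$ normalised by (\ref{PI2asym}). Concretely, I would set $H({\cal X},{\cal T};1)=A\,U({\cal X}/B,{\cal T}/C)$ and determine $A,B,C$ from three constraints: (i) $H$ satisfies KdV (which, using that $U$ itself satisfies a KdV flow in its arguments, gives one algebraic relation between $A,B,C$), (ii) the leading $|{\cal X}|^{1/3}$ coefficient matches $(6/k)^{1/3}$, and (iii) the subleading ${\cal T}|{\cal X}|^{-1/3}$ coefficient matches. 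Undoing the substitution $\lambda=\epsilon^{6/7}$ and repackaging the prefactor $1/(nu_c^{n-1})$ of (\ref{uexp}) then produces the expansion (\ref{KP12}) with the scaling parameter $\kappa=-36\,G^c_{\xi\xi\xi}t_c^4$, and the linear term $\bar{y}(F_y-F_\xi G^c_{\xi\xi y}/G^c_{\xi\xi\xi})$ coming directly from (\ref{beta}).

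The main obstacle is not the algebraic reduction to PI2, which follows from the Theorem plus the already quoted KdV result, but the \emph{rigorous justification} that the true KP solution admits such an asymptotic expansion with controlled error $O(\epsilon^{4/7})$. The Theorem only identifies the leading-order equation, under the hypothesis that the multiscale limit exists; showing that the limit actually exists requires either strong a priori estimates on $u(x,y,t;\epsilon)$ uniform in $\epsilon$ near $(x_c,y_c,t_c)$, or an inverse-scattering-type analysis analogous to the one used for KdV by Claeys--Grava. Because neither technology is currently available in the two-dimensional KP setting, the statement must remain at the level of a conjecture, which is why the numerical verification of Sections~\ref{sec:KP}--\ref{sec:gKP} plays the role of the substantive evidence supporting the claim.
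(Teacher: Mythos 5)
Your proposal follows essentially the same route as the paper: fix $\lambda=\epsilon^{6/7}$ so that the Theorem reduces the inner problem to KdV with $\varepsilon=1$, match $H$ at $|{\cal X}|\to\infty$ to the outer cusp solution (\ref{u_exp})--(\ref{char2}) to obtain the cube-root asymptotics, and then quote the Dubrovin/Claeys--Grava KdV universality result to identify $H$ with a rescaled PI2 transcendent, the scaling constants giving $\kappa$ and the linear term coming from (\ref{beta}). Your closing remark about why the statement remains a conjecture (the unproven existence of the multiscale limit, with the numerics as substantive evidence) is exactly the paper's own position.
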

In the particular case in which the initial data is an even function 
of $y$, namely $u_0(x,y)=u_0(x,-y)$ one can easily check that this property is preserved by the KP equation, namely
$u(x,y,t,\epsilon)=u(x,-y,t,\epsilon)$ and therefore all the odd derivatives with respect to $y$ of the function $F$ vanish.
The formula (\ref{KP12}) can be simplified to the form
\begin{equation}
\label{KP12e}
u(x,y,t;\epsilon)\simeq u_c+\dfrac{6\left(\dfrac{\epsilon^2}{\kappa^2}\right)^{\frac{1}{7}}}{nu_c^{n-1}}U\left(\frac{\bar{x}-u^n_c\bar{t}-\dfrac{t_c}{2}G^c_{yy}\bar{y}^2}{(\kappa\epsilon^6)^{1/7}},\frac{\bar{t}-\frac{t^2_c}{2}\bar{y}^2G_{\xi yy}^c}{(\kappa^3\epsilon^4)^{1/7}}\right)+O(\epsilon^{\frac{4}{7}}),
\end{equation}
where $\bar{x}=x-x_c$, $\bar{t}=t-t_c$ and $\bar{y}=y-y_c$  and $G=F^n$.

\section{Numerical approaches}\label{sec2}
The numerical task in this paper is to solve the generalized KP 
equation (\ref{KP}) in \emph{evolutionary form},
\begin{equation}
    u_{t}+u^{n}u_{x}+\epsilon^{2}u_{xxx}=\sigma 
    \partial_{x}^{-1}u_{yy},\quad \sigma=\pm1
    \label{gKPev},
\end{equation}
and the generalized  dKP equation (equation (\ref{gKPev}) after 
formally putting $\epsilon=0$) after the transformation 
(\ref{implicit0}), i.e., (\ref{eqF}) in evolutionary form,
\begin{equation}
    F_{t}=\sigma\left((1+tnF^{n-1}F_{\xi})\partial_{\xi}^{-1}F_{yy}-tnF^{n-1}F_{y}^{2}\right)
    \label{eqFev};
\end{equation}
the nonlocal operators $\partial_{x}^{-1}$ and $\partial_{\xi}^{-1}$ are defined 
in Fourier space by their respective singular Fourier symbols $-i/k_{x}$ and 
$-i/k_{\xi}$ respectively where $k_{x}$ and $k_{\xi}$ are the Fourier 
variables dual to $x$ and $\xi$ respectively. To avoid problems with 
these singular Fourier symbols, we will  always consider 
initial data with $\partial_{x}^{-1}u(x,y,0)\in \mathcal{S}(R^{2})$, 
i.e., initial data which are the $x$-derivative of a function in the 
Schwartz space of rapidly decreasing smooth functions.

As discussed in \cite{KR11}, for such initial data it is convenient 
to use Fourier methods. Denoting with  $\hat{u}$ the 2-dimensional 
Fourier transform of $u$, equations (\ref{gKPev}) and (\ref{eqFev}) can be written
in the form
\begin{equation}
    \hat{u}_{t}=\mathcal{L}\hat{u}+\mathcal{N}(\hat{u}),
    \label{uhat}
\end{equation}
where $\mathcal{L}$ is a linear, {\it diagonal} operator, which is 
$ik_{y}^{2}/k_{\xi}$ for (\ref{eqFev}), and 
$ik_{y}^{2}/k_{x}+i\epsilon^{2} k_{x}^{3}$ 
for (\ref{gKPev}), and $\mathcal{N}(\hat{u})$ is a nonlinear term.
The idea of an \emph{exponential time differencing} (ETD) scheme is to treat the 
linear part of (\ref{uhat}) exactly. We use the fourth order ETD 
method by Cox and Matthews \cite{CM02}, but other schemes offer 
a very similar performance as discussed in \cite{KR11}. 

The Fourier 
transform will be approximated in standard manner by discrete Fourier 
transforms. This means the solution will be treated as essentially 
periodic on the domain $L_{x}[-\pi,\pi]\times L_{y}[-\pi,\pi]$, where 
$L_{x}$, $L_{y}$ are chosen large enough that the function and the 
discrete Fourier transform decrease to machine precision (here 
$10^{-16}$) if possible. We are working here on serial computers and 
can access a resolution of $N_{x}N_{y}=2^{15}$. Note that the nonlocal operators in (\ref{gKPev}) and (\ref{eqFev}) 
imply that solutions for this equation for generic initial data in 
$\mathcal{S}(\mathbb{R}^{2})$ will not stay in this space, but will 
develop tails with an algebraic fall off towards infinity, see the 
discussion in \cite{KSM} and references therein. The resulting loss 
of regularity 
 at the domain boundaries leads to a slower decrease of the Fourier 
coefficients than for an exponentially decreasing function. Thus one is 
forced to use higher resolution or larger domains for KP than for 
KdV solutions where the solution for initial data in 
$\mathcal{S}(\mathbb{R})$ stays in this space. As mentioned, 
the nonlocal terms 
in (\ref{gKPev}) and (\ref{eqFev}) correspond to singular Fourier 
symbols. To compute their action numerically, we regularize these 
symbols by adding some constant of the order of machine precision. In 
addition we use for (\ref{eqFev}) Krasny filtering \cite{Kra}, i.e., Fourier 
coefficients with a modulus smaller than $10^{-10}$ are put equal to 
zero.  This is necessary for (\ref{eqFev}) since the nonlocality 
there appears not only in the linear part that is treated exactly in 
ETD schemes.

The decrease of 
the Fourier coefficients allows in any case to control the spatial 
resolution in the numerical solution. The accuracy in time is 
controlled via the $L^{2}$ norms of the solutions $u$ and $F$ which 
are both exactly conserved for solutions of (\ref{gKPev}) and 
(\ref{eqFev}). Due to unavoidable numerical errors, the $L^{2}$ norm 
of the numerical solutions will depend in general on time. 
We use the relative computed $L^{2}$ norm denoted by 
$\Delta_{2}$ as an indicator of the numerical accuracy of the 
solution. As discussed in \cite{KR11} this quantity is a valid 
criterion for sufficient spatial resolution and overestimates the 
numerical accuracy by one to two orders of magnitude. 

Since the focus of this paper is on the critical behavior of the 
solution to generalized  KP equations near the critical points (\ref{GC}) of the solutions 
to the corresponding dispersionless equations, it is crucial to 
obtain these critical points and the solution at this point with high 
accuracy. To this end we run the code for (\ref{eqFev}) with a 
resolution of $N_{x}=2^{9}$ and $N_{y}=2^{11}$  for the data 
symmetric with respect to $y\mapsto -y$ and with $N_{x}=N_{y}=2^{10}$ 
for the non symmetric data with 
$N_{t}=1000$ time steps to some estimated break up time. Once the 
quantity $\Delta$ in (\ref{GC}) becomes negative, the code is 
stopped. Then it  is 
restarted with the same parameters between the last time $t_{i}$ with positive 
$\Delta$ and the first time $t_{e}$ with negative $\Delta$ with the 
computed solution at $t_{i}$ as initial data (this means the time 
resolution for this step has been increased by a factor 1000). If 
needed, this procedure is iterated. Note that the Fourier 
coefficients in all studied examples for (\ref{eqFev}) decrease to the order of the 
Krasny filter during the whole computation, and the computed $L^{2}$ 
norm of the solution to (\ref{eqFev}) is conserved to better than 
$10^{-12}$. 

It turns out that the precision obtained in this way for the critical 
time $t_{c}$ is sufficient to assure that equations (\ref{GC}) can be 
satisfied to better than plotting accuracy ($10^{-3}$). If much 
higher precision were needed, an interpolation of the solution on the 
recorded time steps could be used. To identify the critical values 
$x_{c}$ and $y_{c}$ via (\ref{GC}), an interpolation in the spatial 
coordinates is, however, needed (the interpolation is done by using 
the exact representation of the truncated 
Fourier series). To solve the equations $G_{\xi\xi}=G_{\xi y}=0$, 
we identify the coordinates of the minimum of $\Delta$ at the found 
critical time. These values are used as the starting values for an 
iterative solution  via the algorithm 
\cite{fminsearch} distributed in Matlab as \emph{fminsearch}. The 
equations are solved to machine precision. At the  identified 
critical point, the derivatives entering formula (\ref{KP12}) are 
computed. 

The PI2 transcendent appearing in the asymptotic formula (\ref{KP12}) 
is computed as detailed in \cite{GKK} to essentially machine 
precision. 

\section{Solutions to the Kadomtsev--Petviashvili equations near the 
critical points}
\label{sec:KP}
In this section we study solutions to the KP I and II equations for 
$\epsilon=0.01$ near 
a critical point of the solutions to the  corresponding 
dispersionless equations for the same initial data. The solutions are 
compared to the asymptotic formula (\ref{KP12}). 
Throughout the paper, we consider the initial data
\begin{equation}
    u_{0}(x,y) = -6\partial_{x}\mbox{sech}^{2}(x^2+y^2),
    \label{u0sym}
\end{equation}
which are symmetric with respect to $y\mapsto -y$, and 
\begin{equation}
    u_{0}(x,y) = 6\partial_{x}\exp(-x^{2}-5y^{2}-3xy)
    \label{u0asym},
\end{equation}
which do not have such a symmetry. 
Solutions to the dKP equation for these initial data have been 
discussed in detail in \cite{GKE}, where also figures can be found. 
Therefore we will concentrate here 
on the related KP solutions. 

\subsection{Symmetric initial data}
In Fig.~\ref{KP1e4sym} we show the solution to the KP I and II equation with 
$\epsilon=0.01$ for the 
symmetric initial data (\ref{u0sym}) at the critical time $t_{c}\approx 0.222$. 
We use $N_{x}=2^{13}$ and $N_{y}=2^{10}$ Fourier modes and 
$N_{t}=1000$ time steps. Note that while a solution to the dKP I equation (\ref{dKP}) 
gives under the simultaneous transformation $x\mapsto -x$, 
$u\mapsto -u$ a solution to the dKP II equation, this is not the case 
for the full KP equation. As can be seen in Fig.~\ref{KP1e4sym}, the KP I 
solution for localized initial data develops a tail with algebraic 
decrease to the right, 
whereas such a tail goes to the left for KP II solutions. Due to the 
imposed periodicity, these tails reenter the computational 
domain on the opposing side. Since we are interested here in the 
formation of dispersive shocks that are not affected by this 
reentering of the tails, we can use the shown domain size (the 
Fourier coefficients for both cases decrease to at least $10^{-7}$).  
The numerically computed $L^{2}$ norm of the solution is conserved to 
better than $10^{-10}$. The oscillations are hardly visible on these plots which is why we show only 
close-ups in the following. 
\begin{figure}
    \includegraphics[width=0.49\textwidth]{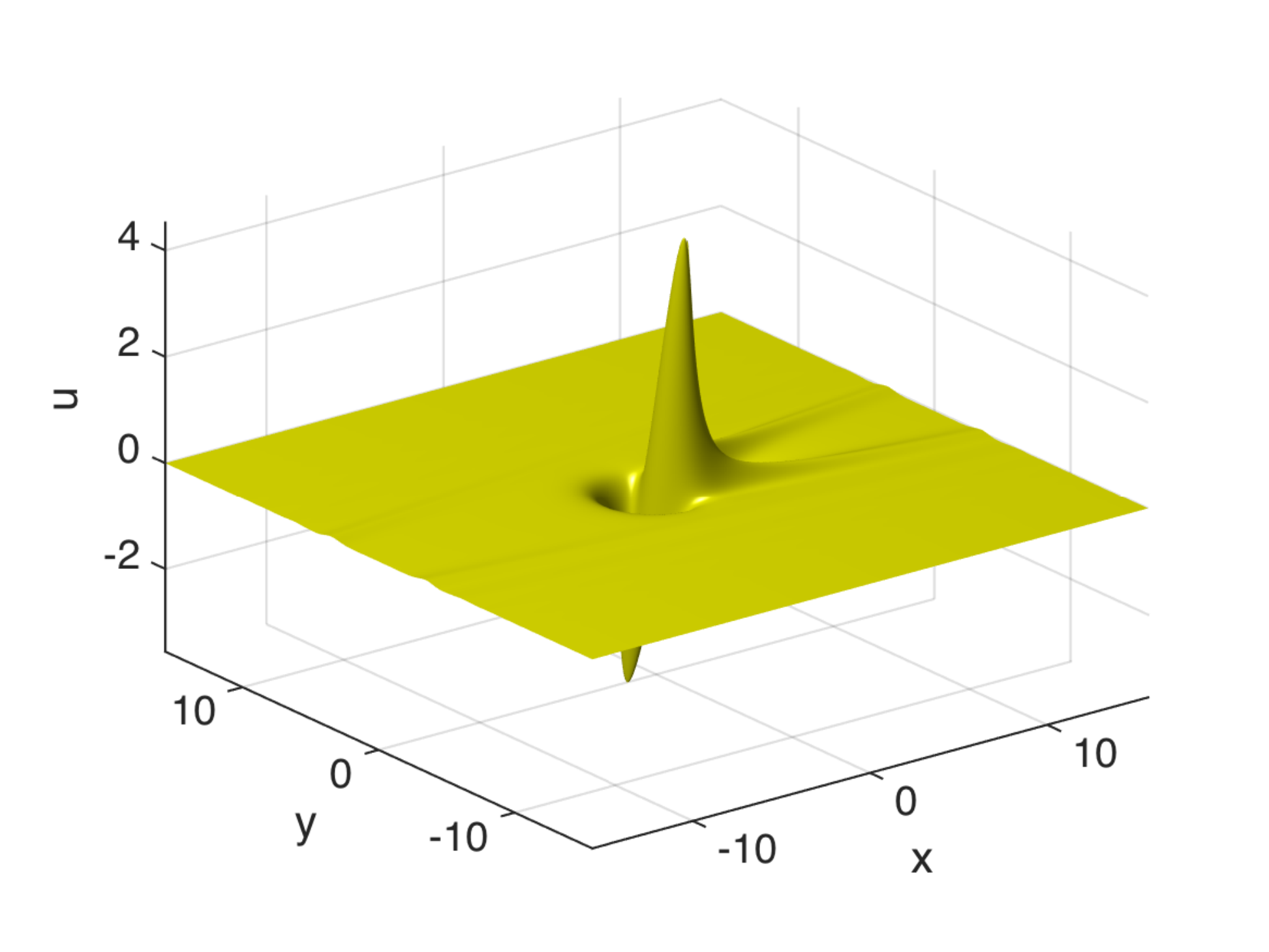}
 \includegraphics[width=0.49\textwidth]{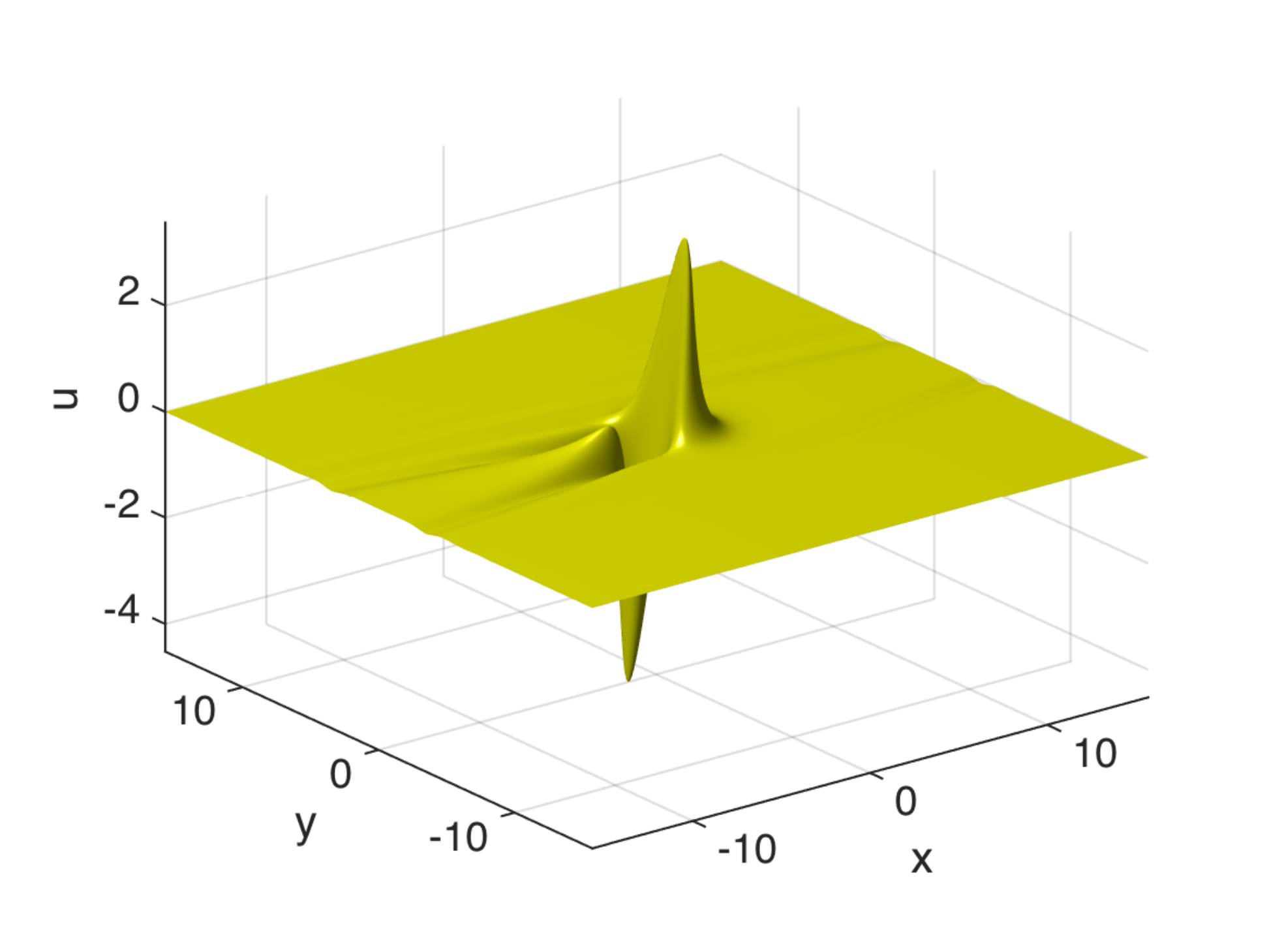}   
 \caption{Solutions to the KP equation  with $\epsilon=0.01$ for the initial data 
 (\ref{u0sym}) at the critical time $t_{c}\approx 0.222$; on the left 
 for KP I, on the right for KP II.   }
 \label{KP1e4sym}
\end{figure}

The mass transfer to infinity via these 
tails implies stronger gradients on the respective side and leads to 
a break-up exactly on this side. Consequently also 
the first oscillation of the KP 
solutions are observed on the side of the tails. For dKP I the 
first critical point is $x_{c}\approx 1.79$ and $y_{c}=0$. First 
oscillations appear around the critical time near this point as can 
be seen in Fig.~\ref{KPIc}. Near the critical point,  the 
KP I solution is well approximated by the asymptotic solution 
(\ref{KP12}) in terms of the PI2 transcendent. The approximation is 
local (for small $|x-x_{c}|$ and small $|y-y_{c}|$), but it can be 
seen that even the first oscillation will be approximately captured 
by small enough $\epsilon$. Also the $y$-dependence is well 
reproduced, but gets worse for larger values of $|y-y_{c}|$.
\begin{figure}
    \includegraphics[width=0.49\textwidth]{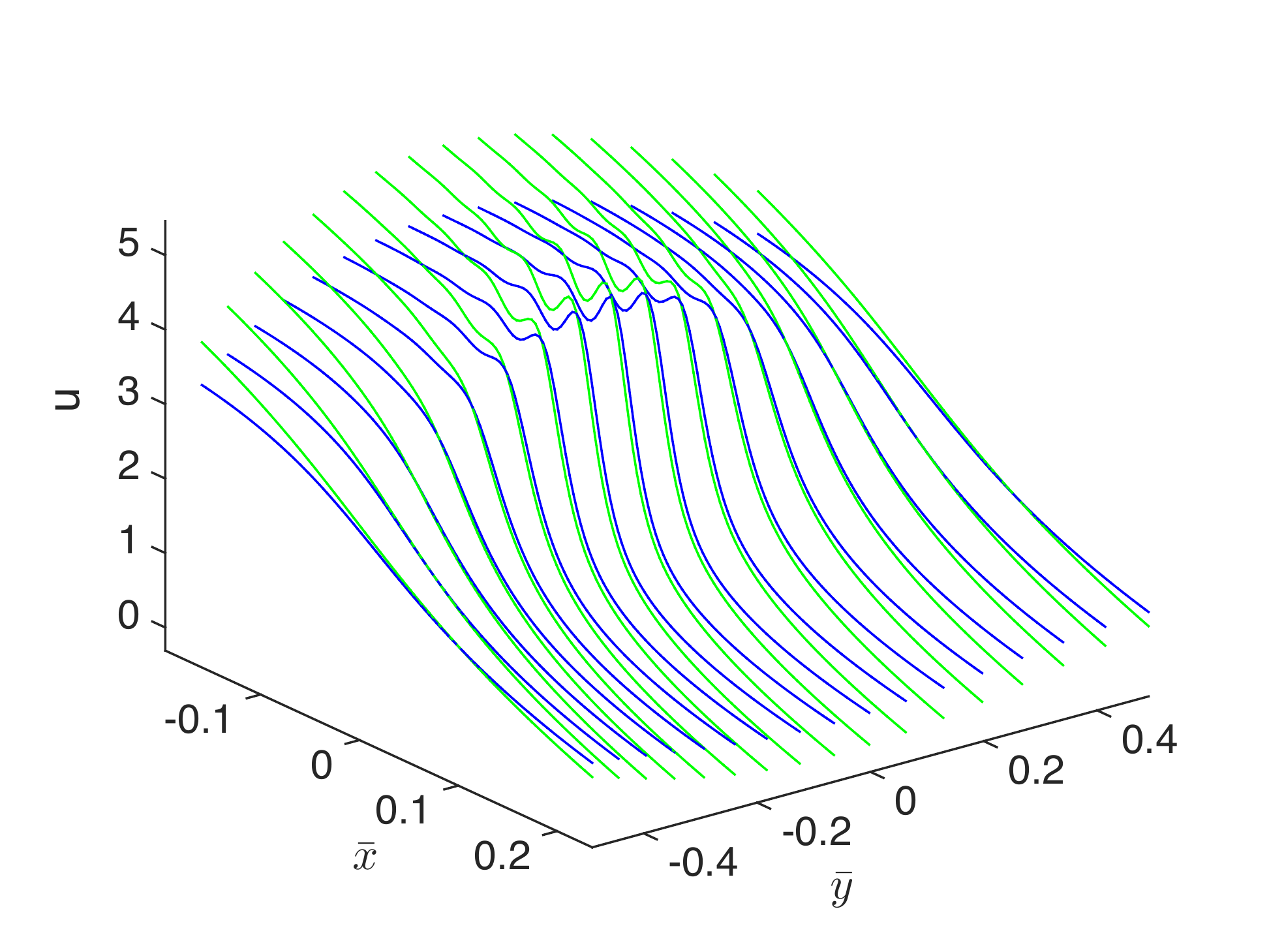}
    \includegraphics[width=0.49\textwidth]{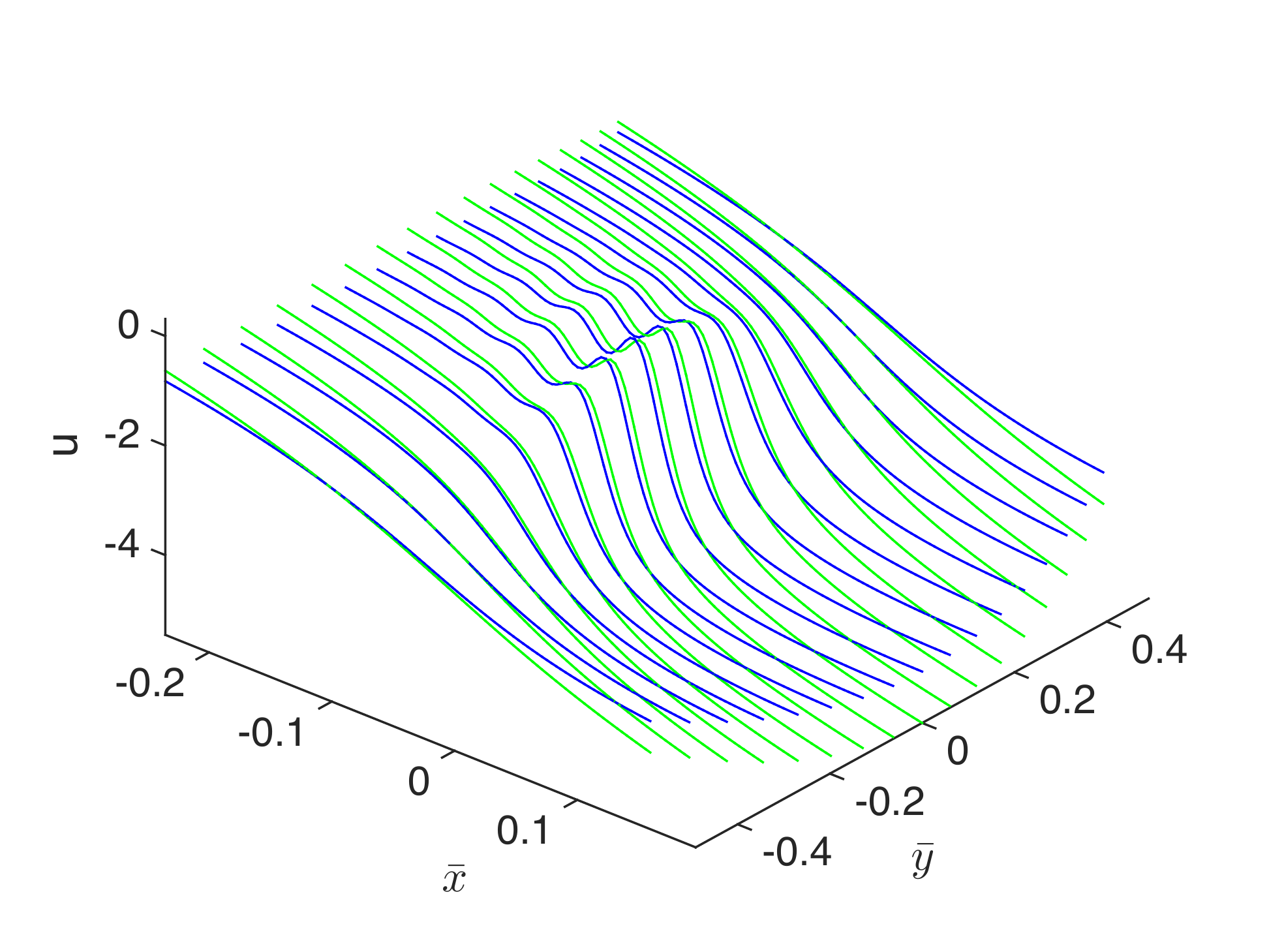}
 \caption{Solutions to the KP  equations  with $\epsilon=0.01$ for the symmetric initial data 
 (\ref{u0sym}) at the critical time $t_{c}\approx 0.222$ near the 
 critical points $x_{c}\approx \pm 1.79$ and $y_{c}=0$ in blue and the 
 corresponding PI2 asymptotic solution (\ref{KP12}) in green; on the left 
 the KP I solution, on the right the KP II solution.   }
 \label{KPIc}
\end{figure}

The asymptotic description (\ref{KP12}) via the PI2 transcendent   of the KP 
solutions is local near the critical point $(x_c,y_c,t_c)$. For the spatial 
dependence, this can be seen in Fig.~\ref{KPIc} at the critical 
time. However the description is also valid for small 
$|t-t_{c}|$ as is visible in Fig.~\ref{KPct} for times before and after 
the critical time. It can be seen that the PI2 asymptotics is shifted 
slightly to the left for KP I as before and moves with higher speed 
than the KP I solution to right. Thus the approximation becomes close 
the inflection point better with time, but the oscillations are less 
well reproduced. Note, however, that there appears to be the same 
number of oscillations of the KP and PI2 solution. 
\begin{figure}
    \includegraphics[width=0.32\textwidth]{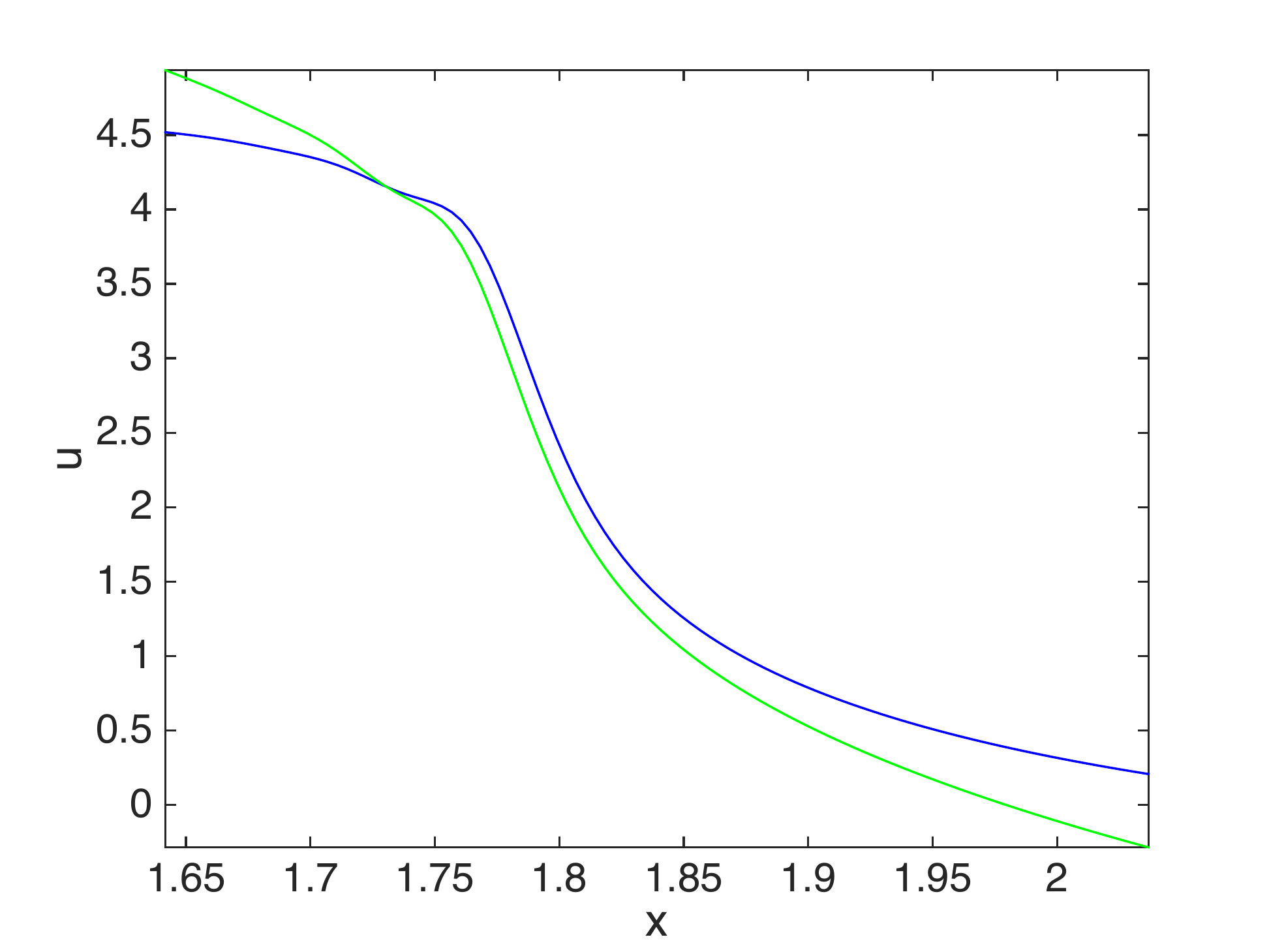}
 \includegraphics[width=0.32\textwidth]{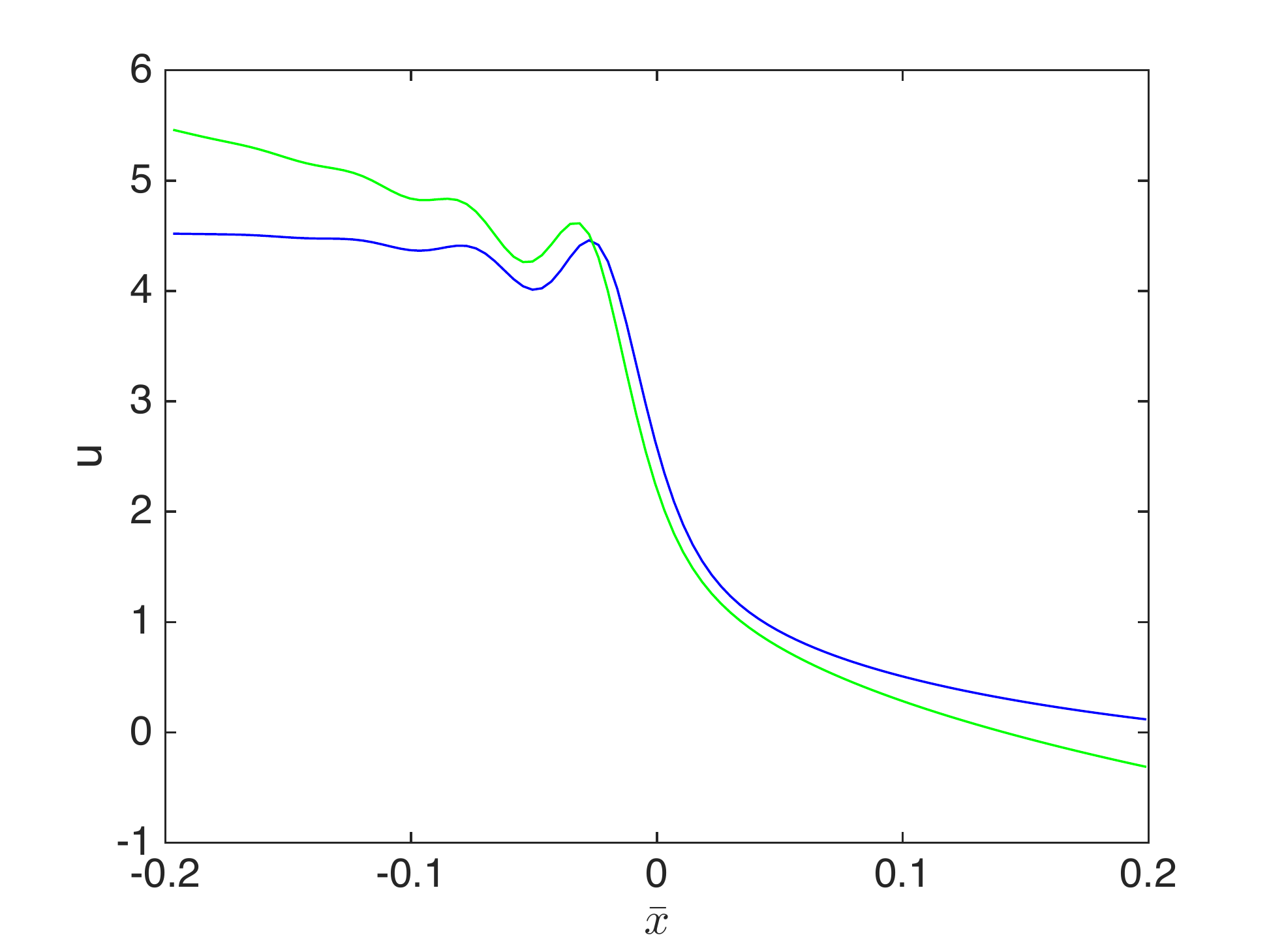}   
 \includegraphics[width=0.32\textwidth]{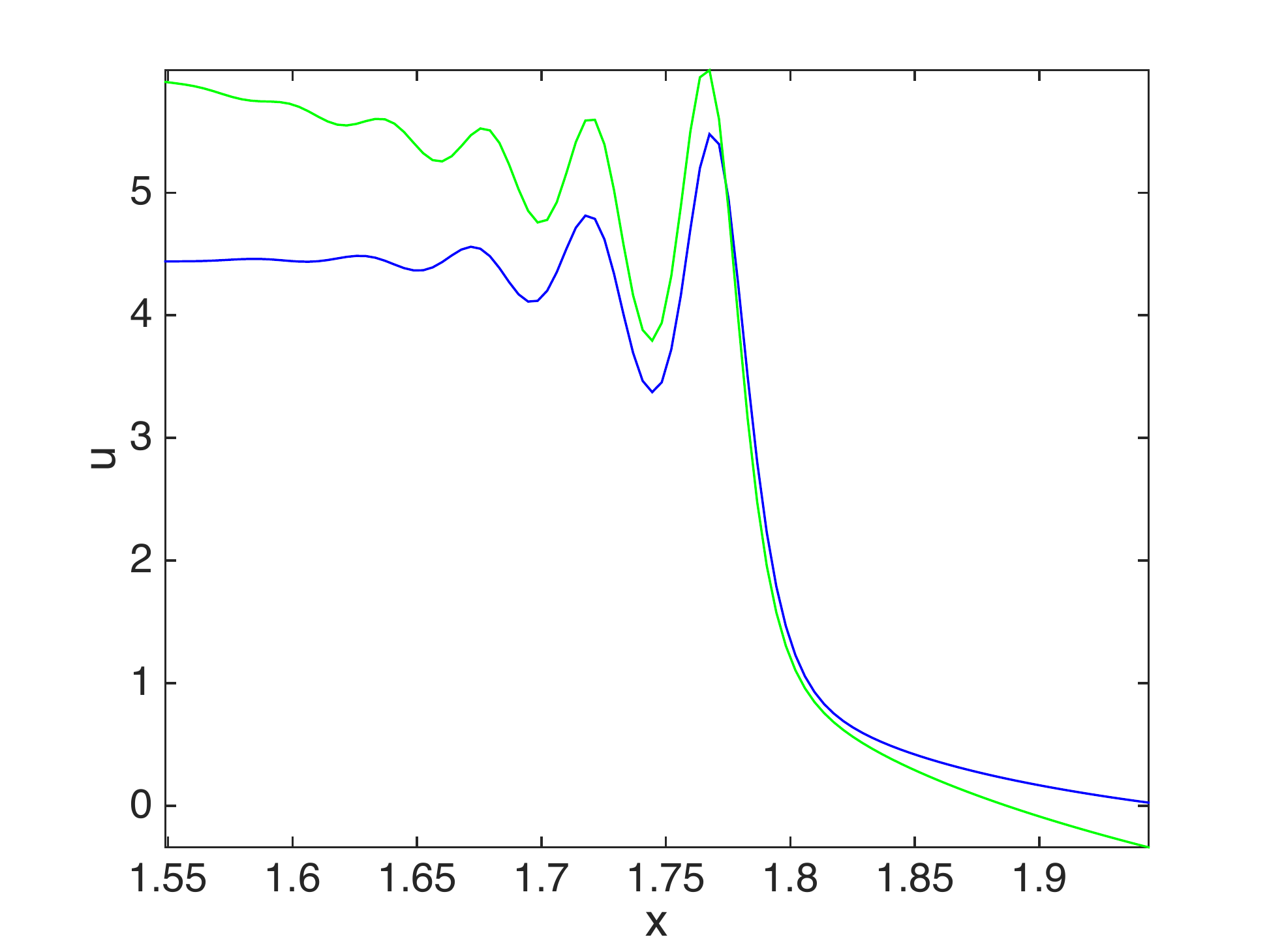}   \\ \includegraphics[width=0.32\textwidth]{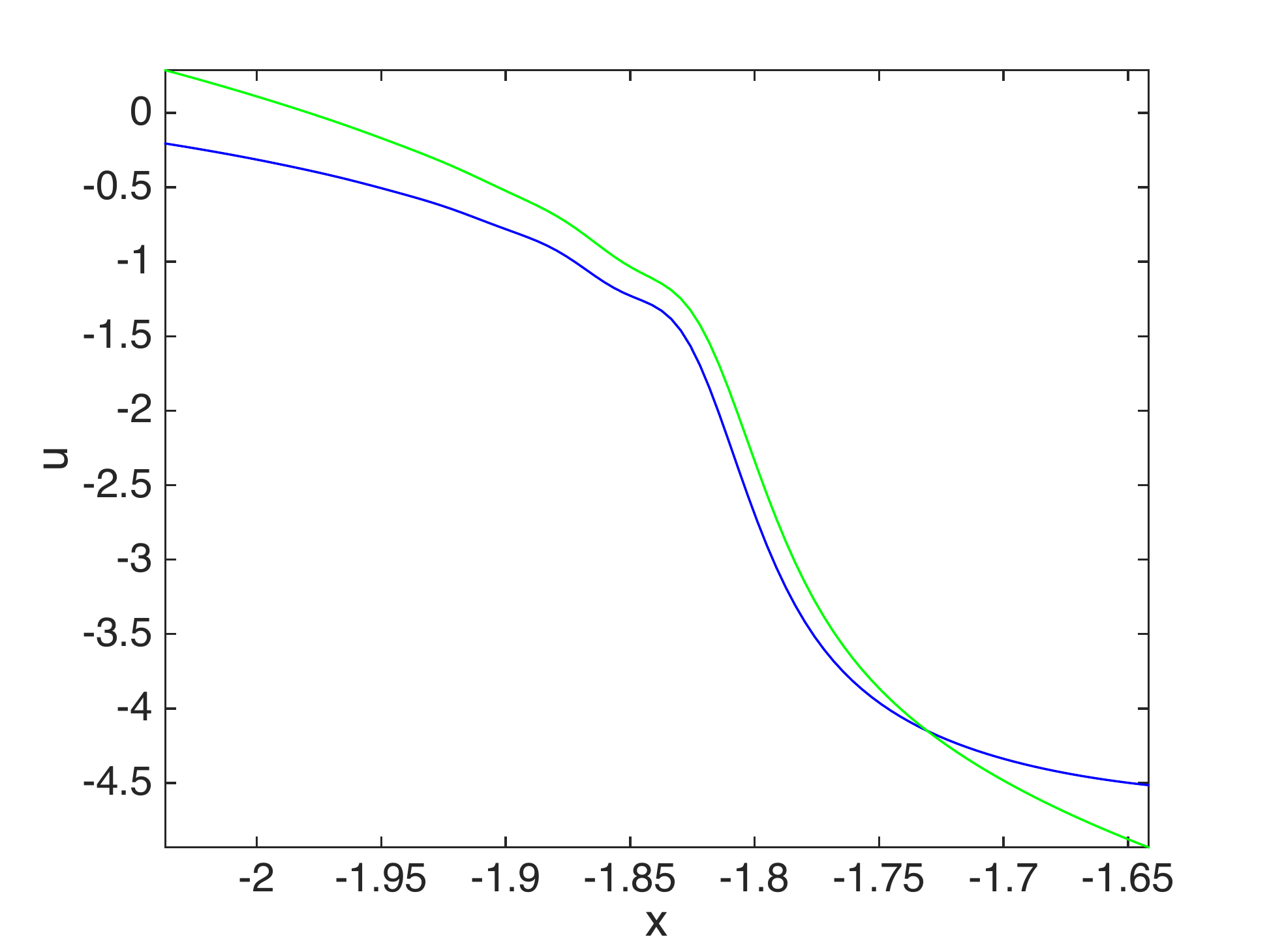}
 \includegraphics[width=0.32\textwidth]{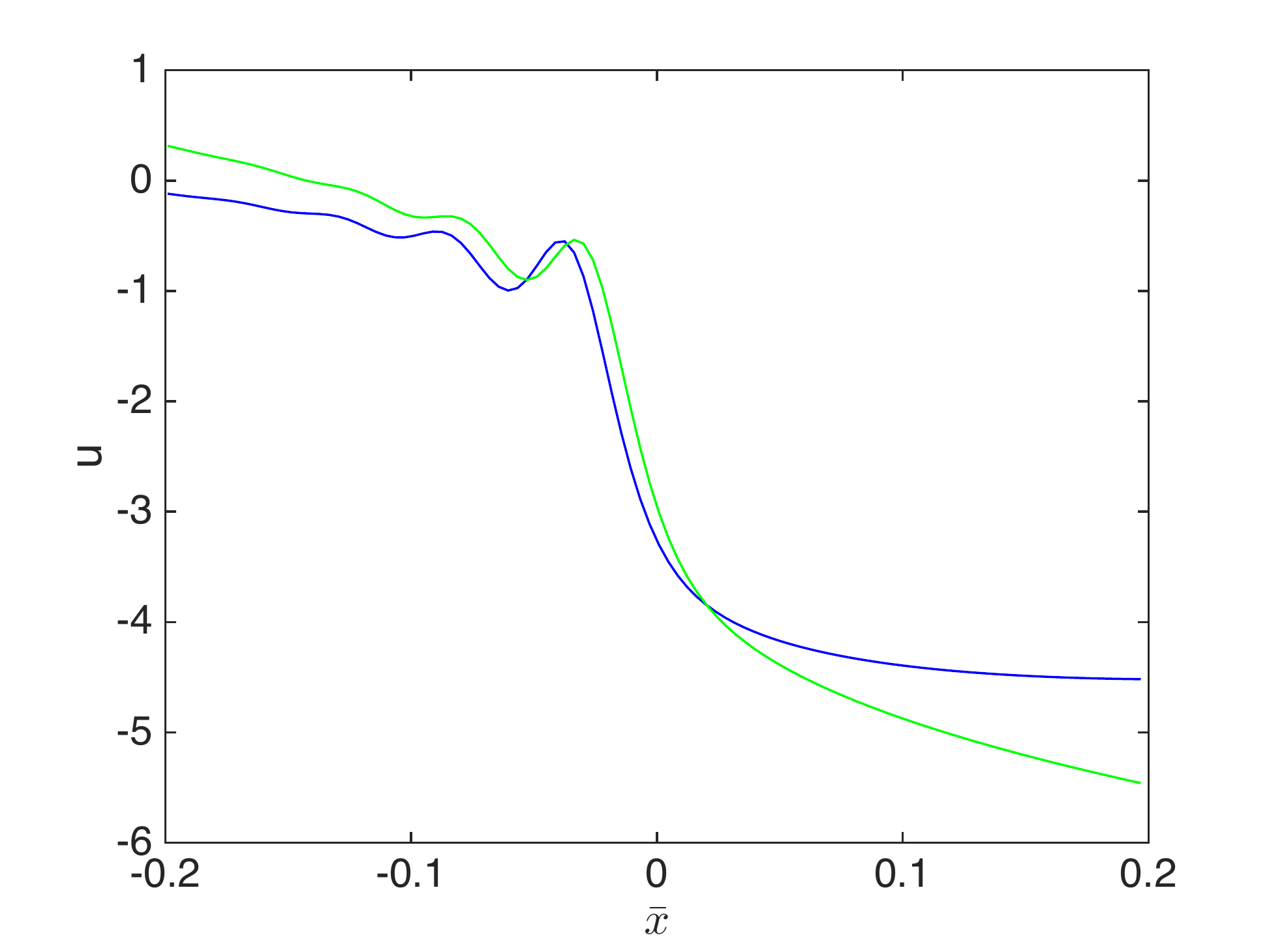}   
 \includegraphics[width=0.32\textwidth]{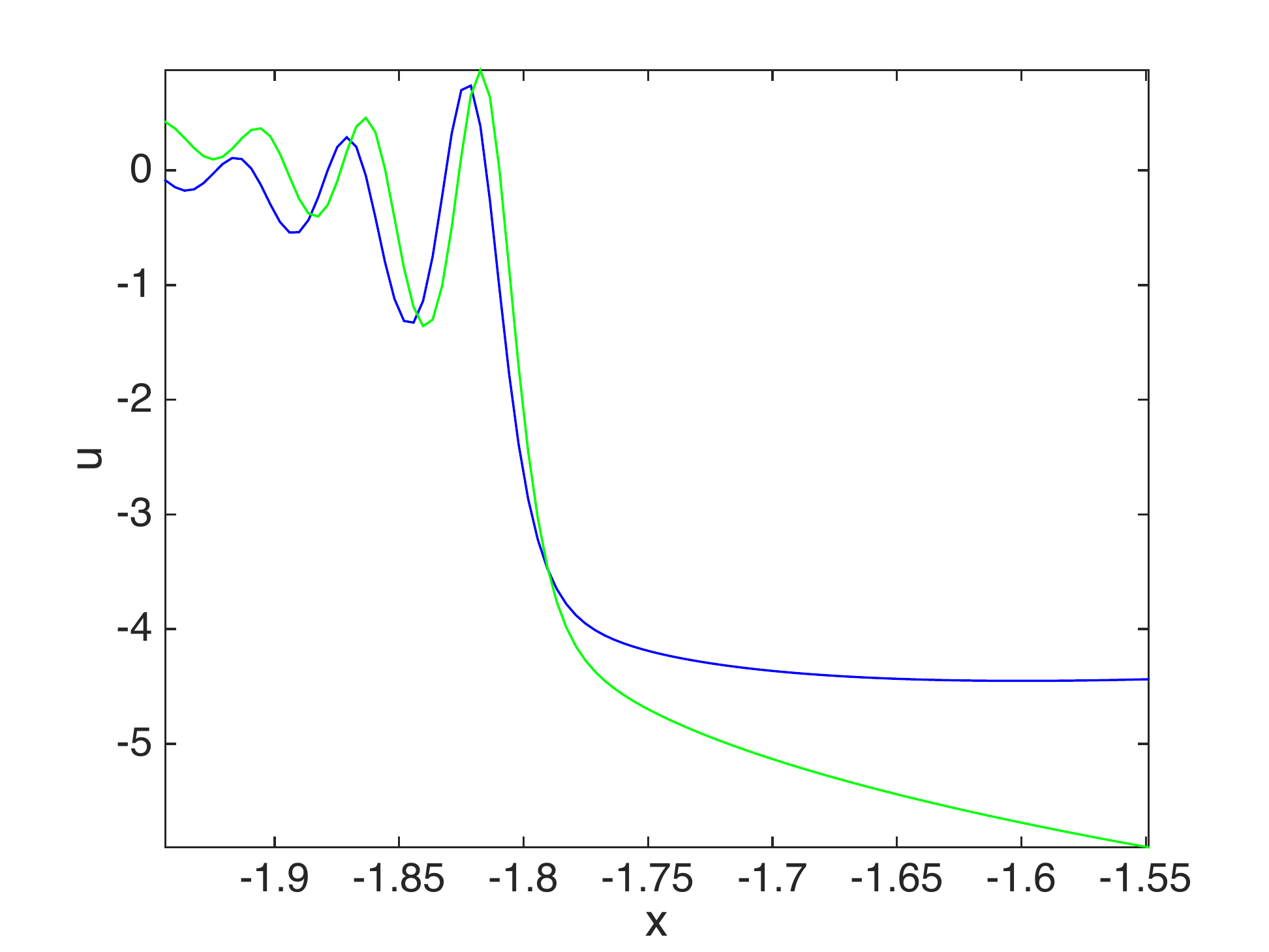}   
 \caption{Solutions to the KP I  equation in the upper row and KP~II 
 equation in the lower row  with $\epsilon=0.01$ for the symmetric initial data 
 (\ref{u0sym}) on the left for $t=0.204<t_{c}$, in the center for 
 $t=t_{c}$ and on the right for 
 $t=0.24>t_{c}$ on the $y$-axis near the 
 critical point $x_{c}\approx -1.79$ in blue and the 
 corresponding  asymptotic solution (\ref{KP12}) in terms of the PI2 transcendent  in green.   }
 \label{KPct}
\end{figure}

The oscillatory zone of the corresponding KP II solution can be seen 
in Fig.~\ref{KPIc} on the right. It appears near $x_{c}\approx-1.79$, and due to 
the symmetry between dKP I and dKP II solutions, it is again 
decreasing with $x$. Note, however, that the asymptotic solution 
(\ref{KP12}) is slightly shifted towards more positive values of $x$ 
with respect to the KP II solution, whereas it is slightly shifted to 
more negative $x$-values for KP I. The quality of the approximation 
is the same in both cases.  The time dependence of the asymptotic 
description can be seen in Fig.~\ref{KPct} on the $y$-axis.
The PI2 asymptotics is slightly shifted to the 
right for $t\leq t_{c}$ and travels with somewhat higher speed than the KP II
dispersive shock front to the left. Thus the approximation of the shock 
front becomes again better with time, and this time also the oscillations 
are better approximated.  

\subsection{Second critical point for symmetric initial data}
The initial data (\ref{u0sym}) are odd functions in $x$. For the 
$1+1$ dimensional Hopf equation, a possible break up would occur at 
the same time at the points  $\pm x_{c}$. For dKP solutions for such 
initial data, this is no longer true. But the method detailed in 
\cite{GKE} to numerically integrate the dKP equation also allows to 
reach the second breaking of the dKP solution for the data 
(\ref{u0sym}). It occurs at the time $t_{c}\approx 0.3001$ at 
$x_{c}\approx -2.033$ for KP I and $-x_{c}$ for KP II. The  behaviour of the solution of 
 KP I  near the critical point $(x_c,t_c,y_c)$ is shown in 
 Fig.~\ref{KPc2} in the upper row. It can be seen that the 
PI2 asymptotic description (\ref{KP12}) is slightly worse than at the first breaking, 
it is again somewhat shifted towards the left.
\begin{figure}
    \includegraphics[width=0.49\textwidth]{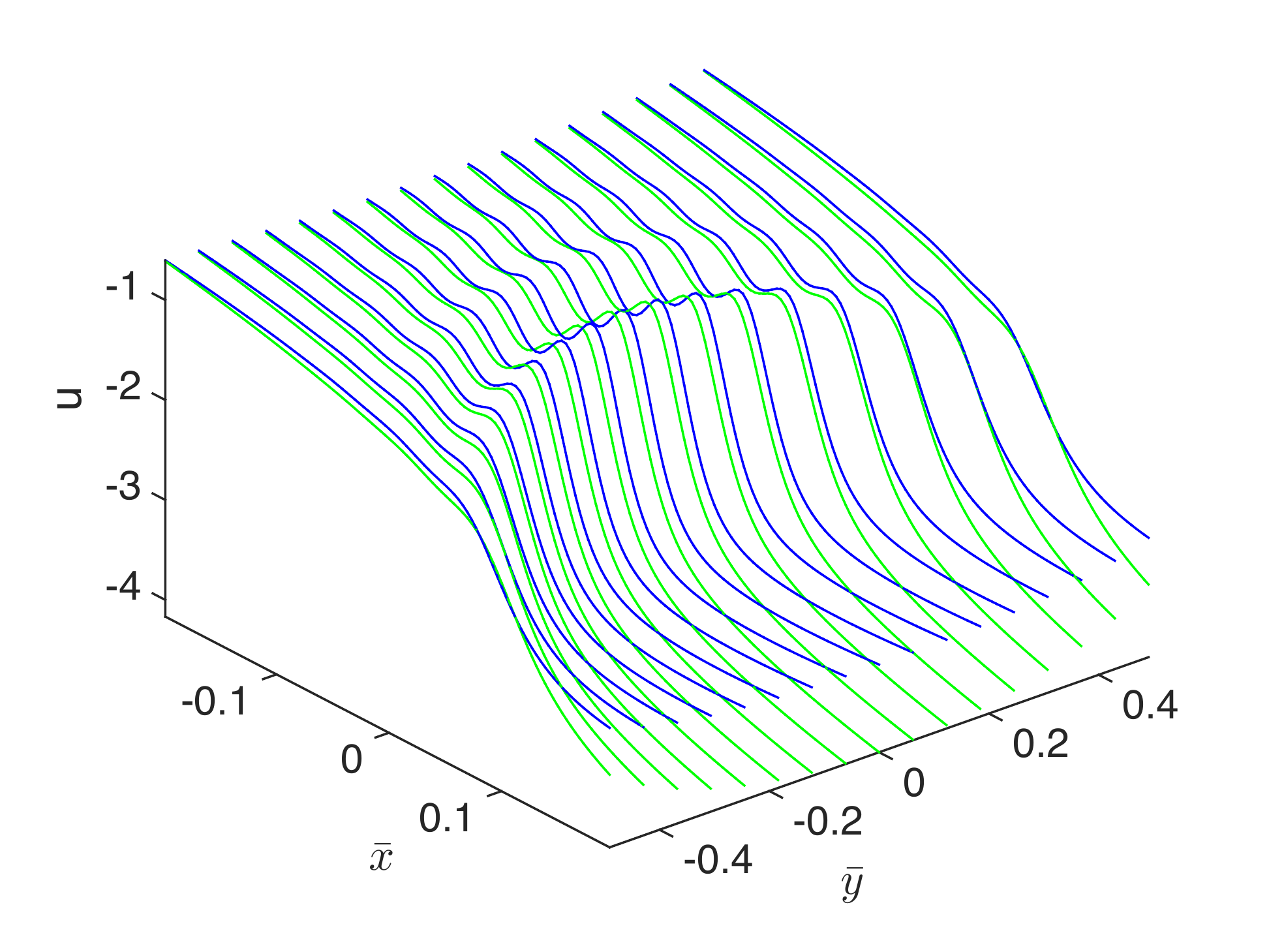}
 \includegraphics[width=0.49\textwidth]{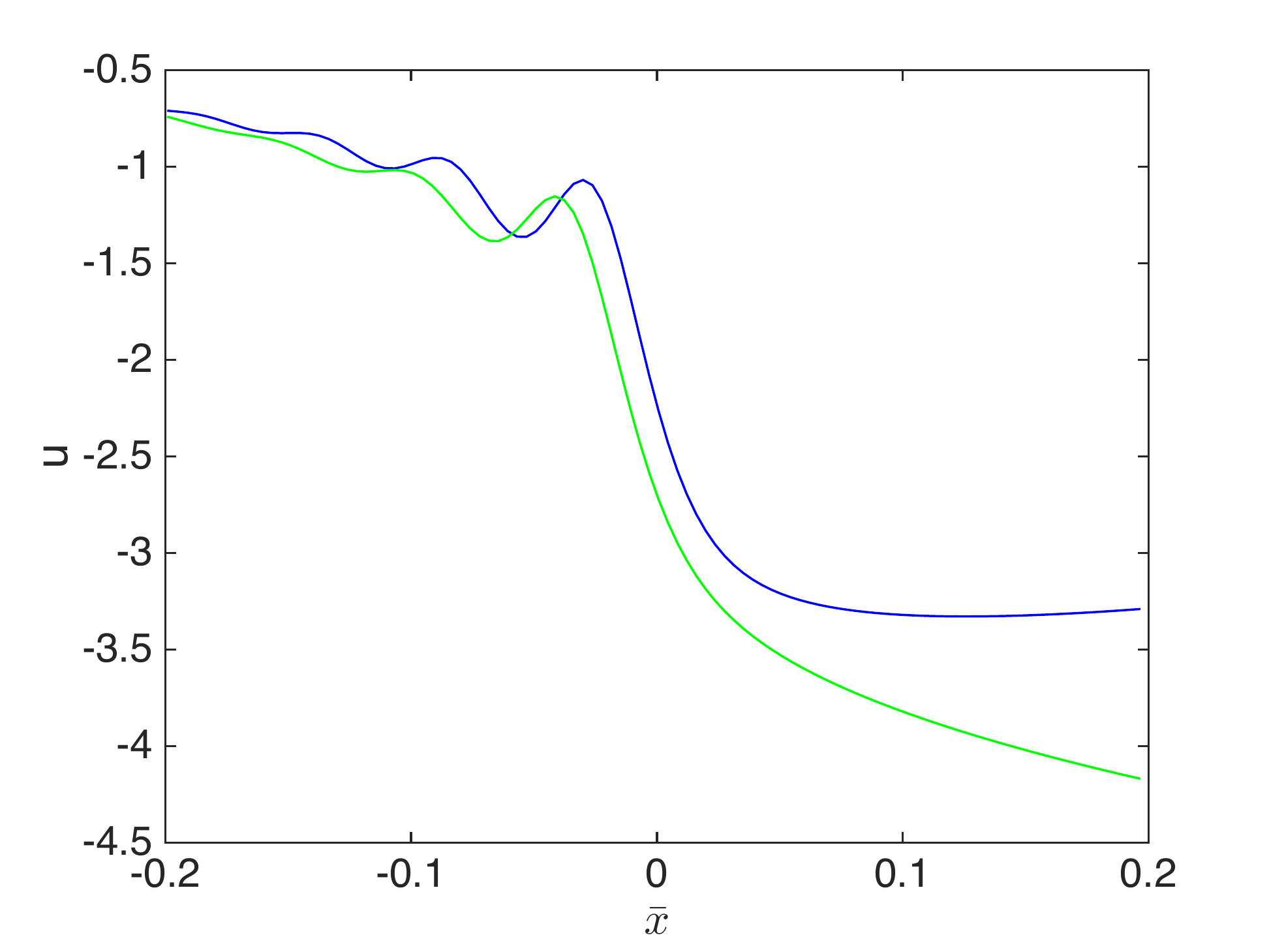}   
    \includegraphics[width=0.49\textwidth]{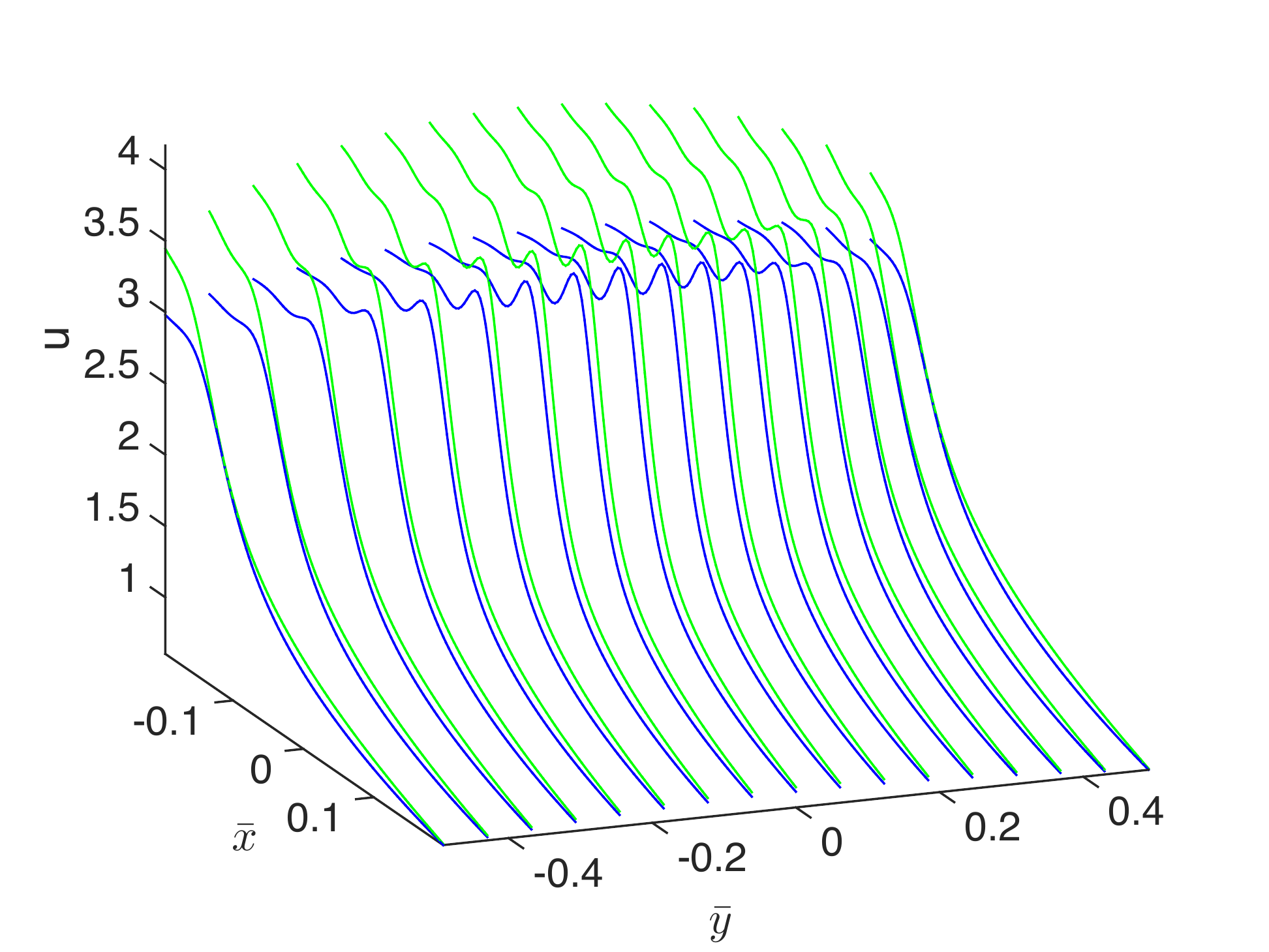}
 \includegraphics[width=0.49\textwidth]{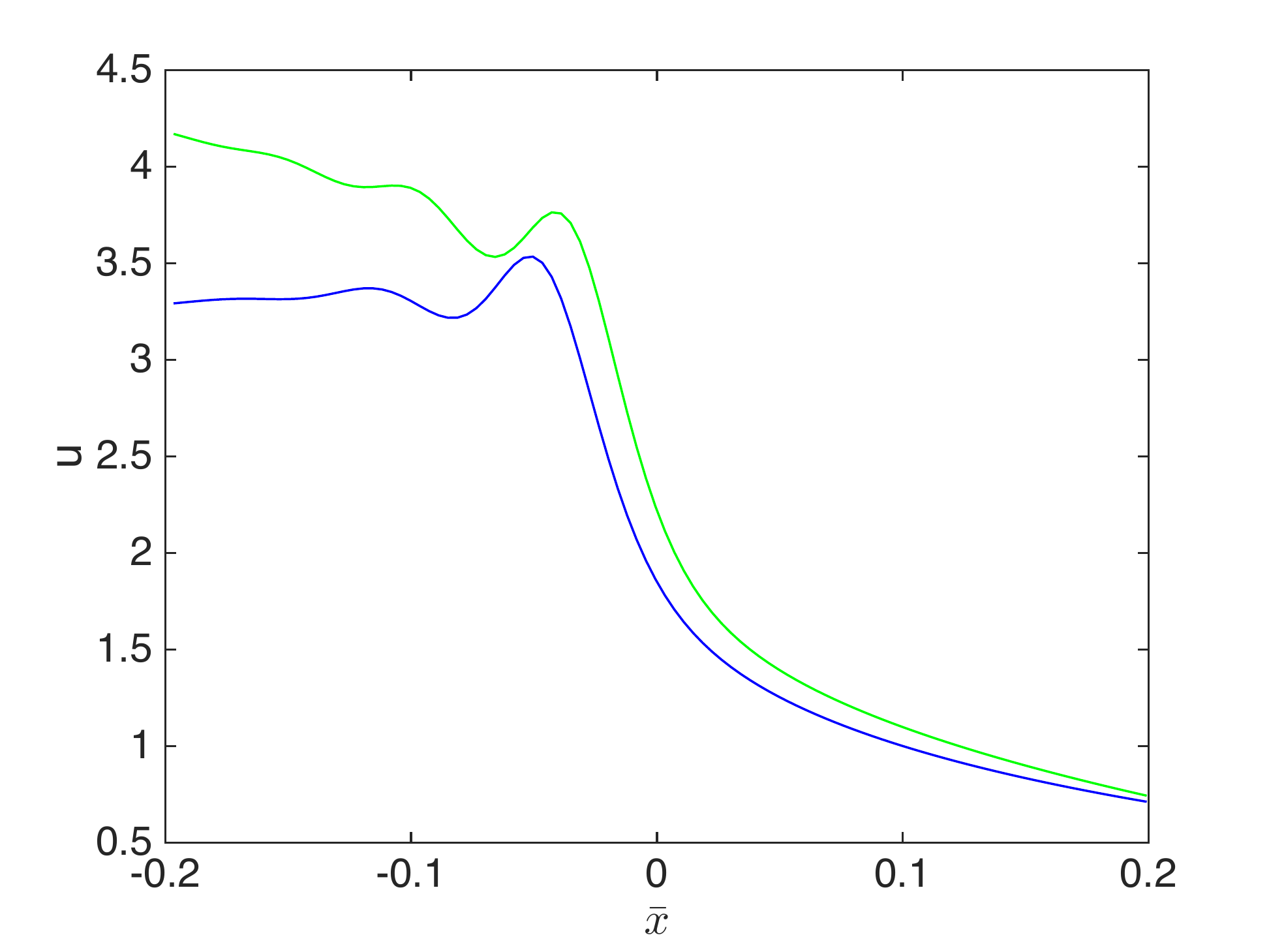}   
 \caption{Solutions to the KP  equation with $\epsilon=0.01$ for the symmetric initial data 
 (\ref{u0sym}) at the critical time $t\approx 0.3001$ of the second 
 break-up  in blue and the 
 corresponding asymptotic solution (\ref{KP12}) in green; on the left 
 in the vicinity of the critical point, on the right on the $y$-axis, 
 in the upper row for KP I, in the lower row for KP II.   }
 \label{KPc2}
\end{figure}

The same situation for KP II can be see in the lower row of Fig.~\ref{KPc2}. Here 
the asymptotic solution is as for the first breaking in KP II shifted 
towards the right. 

\subsection{Nonsymmetric initial data}
Both the numerical approach to dKP in \cite{GKE} and the asymptotic 
formula (\ref{KP12}) are  applicable to non symmetric initial data as for example in  (\ref{u0asym}). For such data, we could not 
reach the second breaking, but the first break-up is well resolved. 
The KP I solution for these data can be seen in the upper row of 
Fig.~\ref{KPasymc} at the critical time 
$t_{c}\approx 0.0855$ in the vicinity of the critical point 
$x_{c}\approx0.1045$ and $y_{c}\approx-0.2566$ on the left. Visibly the 
PI2 asymptotic description (\ref{KP12}) captures well the onset of the oscillations, also 
in dependence of $y-y_{c}$. The corresponding situation for KP II can 
be seen in upper row of Fig.~\ref{KPasymc} on the right. 
The  PI2 asymptotic solution (\ref{KP12}) matches also well with the KP II numerical  solution. 
\begin{figure}
    \includegraphics[width=0.49\textwidth]{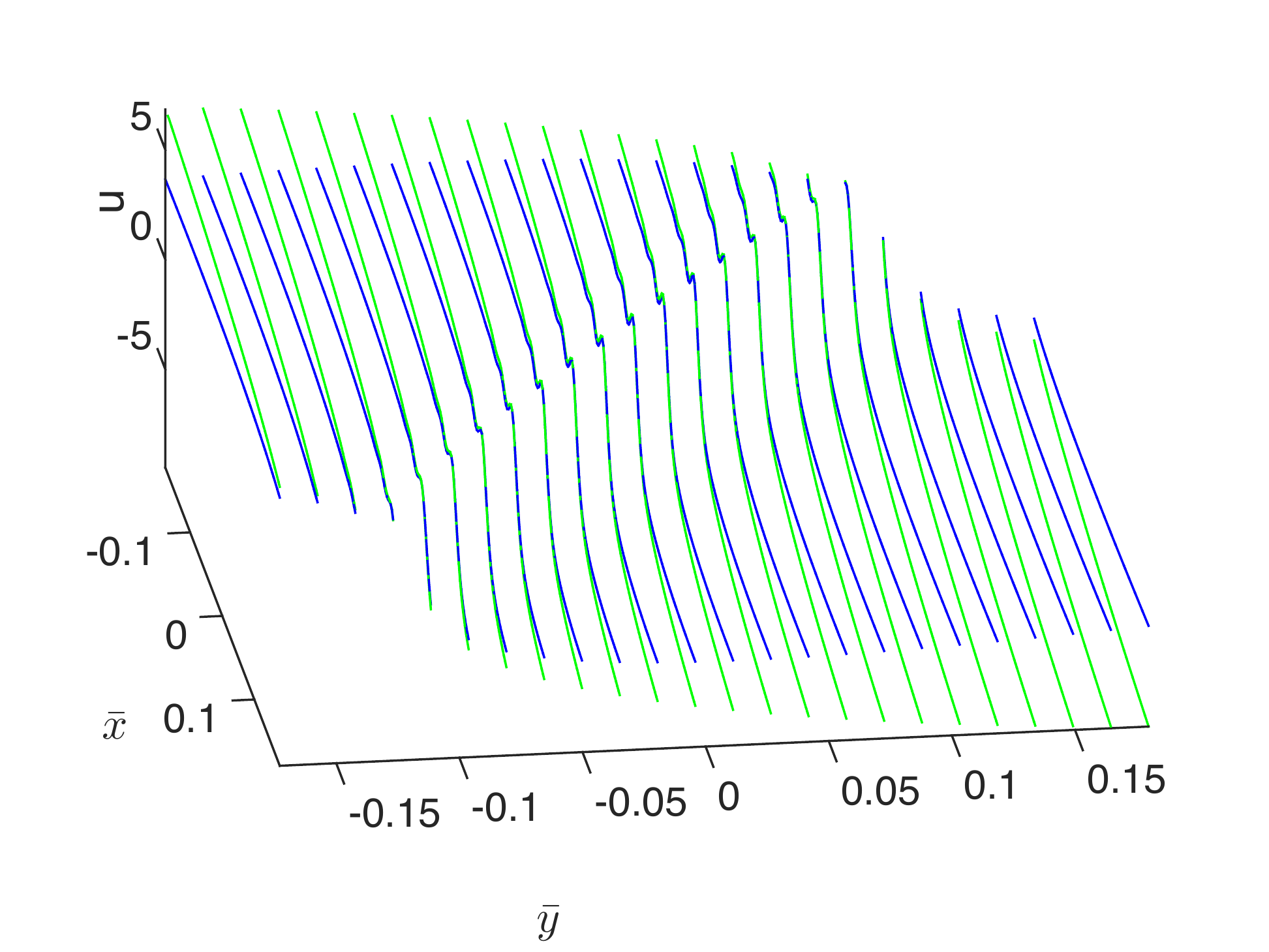}
 \includegraphics[width=0.49\textwidth]{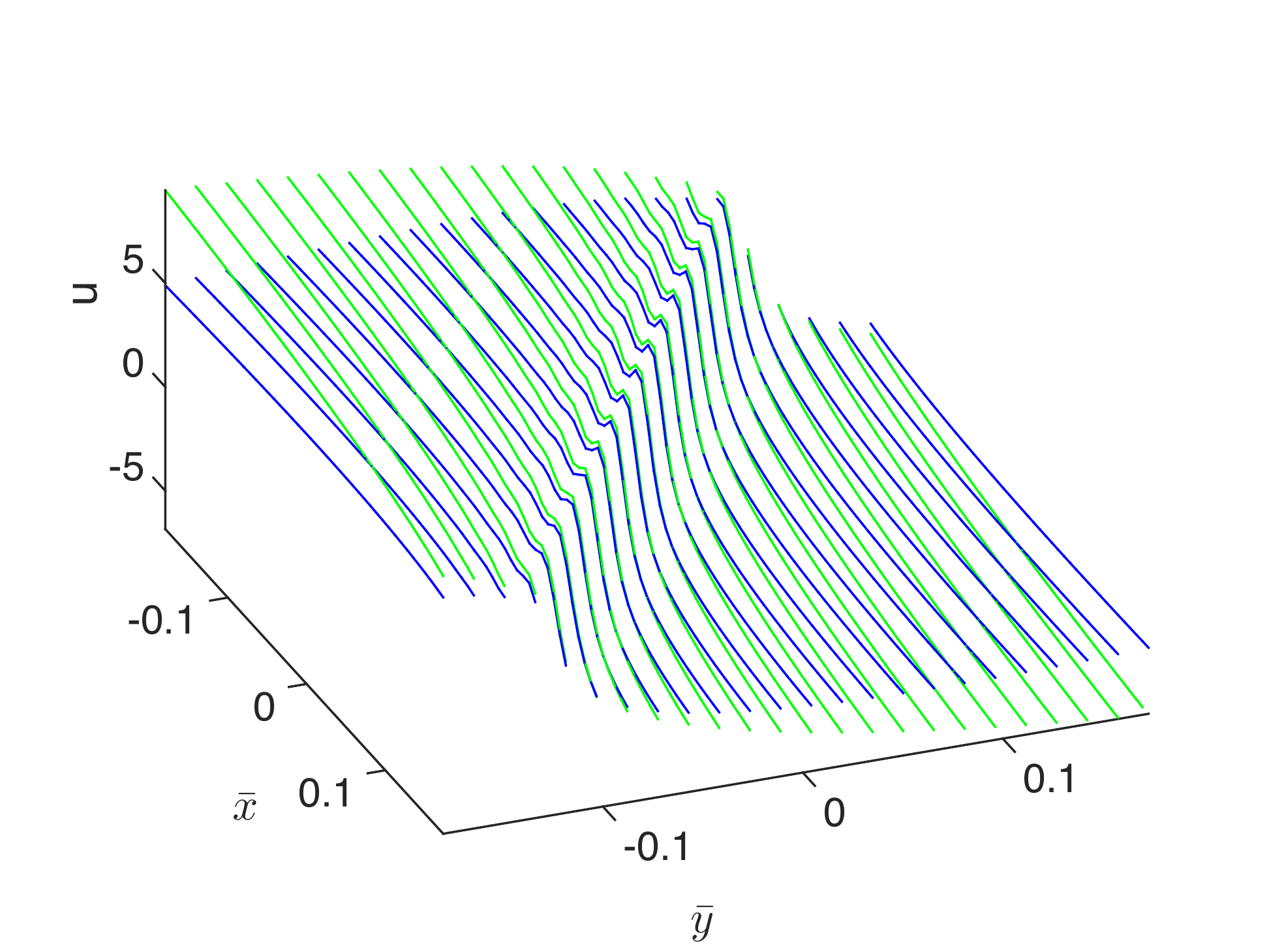}   
    \includegraphics[width=0.49\textwidth]{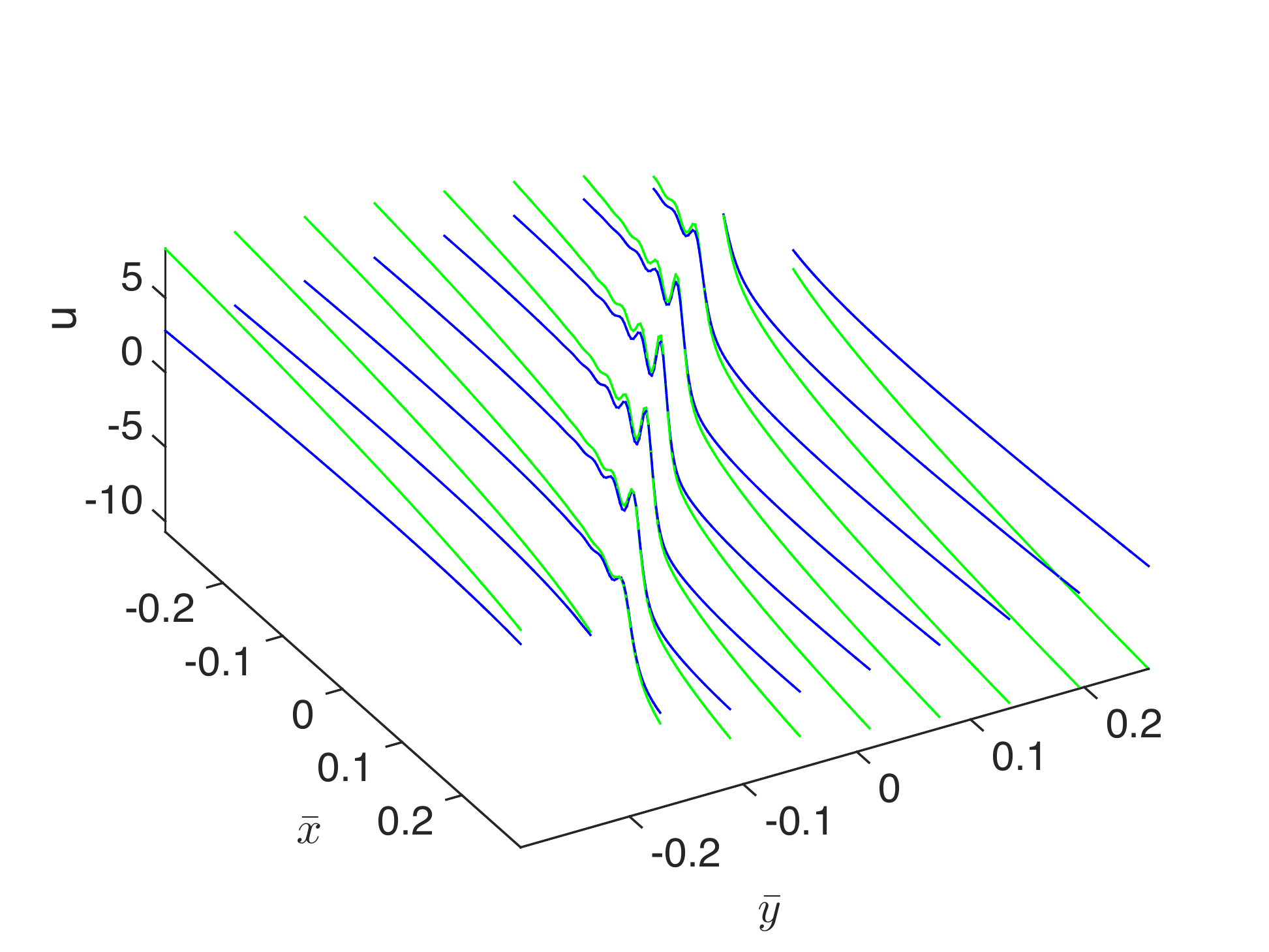}
 \includegraphics[width=0.49\textwidth]{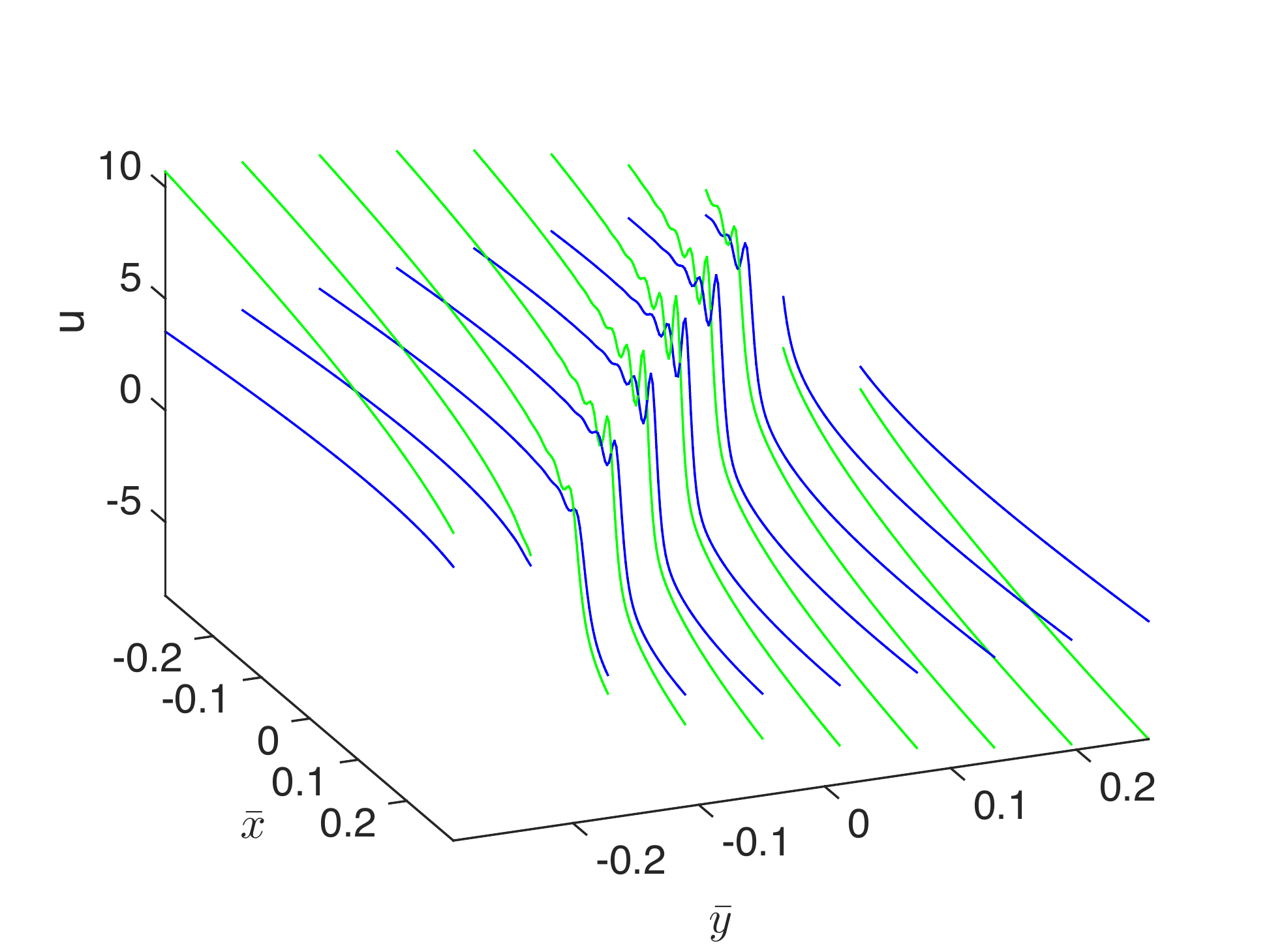}   
 \caption{Solutions to the KP equations with $\epsilon=0.01$ for the asymmetric  initial data 
 (\ref{u0asym}) in the vicinity of the critical point
 in blue and the 
 corresponding PI2 asymptotic solution (\ref{KP12}) in green; on the left 
KP I, on the right KP II, in the upper row at the critical time 
$t\approx 0.0855$, in the lower row at the time $t=0.09>t_{c}$.   }
 \label{KPasymc}
\end{figure}

It can be seen in the upper row of Fig.~\ref{KPasymc} that the situation is not the 
same for KP I and KP II, the asymptotic solution (\ref{KP12}) 
approximates the critical behavior better for KP I than for KP II. 
This persists for larger values of $t$ as can be seen in the lower 
row of 
Fig.~\ref{KPasymc}. Still it is remarkable that the asymptotic 
solution also catches at least qualitatively the behavior of the 
KP solution in larger distance from the critical points, especially 
the $y$-dependence.

\section{Solutions to the generalized Kadomtsev--Petviashvili equations near the 
critical point}
\label{sec:gKP}
Generalized KP equations allow to study the competing influence of 
dispersion and nonlinearity in a $2+1$ dimensional model. It is known 
that solutions to  generalized KP I equations with $n\geq 4/3$ can have blow-up. 
There is no theoretical description of this blow-up, it is just known 
that the $L^{2}$ norm of $u_{y}$ can explode in finite time, see 
\cite{Saut93} and \cite{Liu}. There are no theoretical predictions for generalized KP II so far. 
It is just expected that the defocusing effect in  generalized KP II should 
make blow-up less likely than in  generalized KP I situations 
(there is an analogy between generalized KP and nonlinear 
Schr\"odinger (NLS) equations in this respect). 
Numerical studies in \cite{KS12} and in more detail in 
\cite{KP14} have indicated that a self-similar blow-up can be 
expected for generalized KP I for $n\geq 4/3$ (see also the following section). 
No blow-up was found for generalized KP II for $n<3$. Therefore we consider here 
only the case $n=3$ where also for generalized KP II blow-up was observed. In 
addition an odd exponent $n$ has the advantage that the breaking is 
similar to the case $n=1$ studied in the previous section. We always 
use $N_{x}=N_{y}=2^{12}$ Fourier modes and $N_{t}=2000$ time steps. 

We consider the same initial data as in the KP case. It is 
expected that the balance between dispersion and nonlinearity is here 
tilted towards the nonlinearity. This means that dispersive shocks 
will be less pronounced (less oscillations confined to smaller 
domains), and that the break-up will happen at earlier times. In fact 
we find for the symmetric initial data (\ref{u0sym})that a first break-up  in the generalized dKP solution occurs 
at $t_{c}\approx0.0059$ (we did not study a second breaking) 
and $x_{c}\approx1.33$ and $y_{c}=0$. The 
corresponding solution to the generalized KP I equation can be seen for the 
critical time in the vicinity of the critical point in 
Fig.~\ref{gKPIc} in the upper row. Due to the reduced dispersive effects, 
the PI2  asymptotic description (\ref{KP12}) is almost more accurate than for KP 
I, but in a smaller domain. 
\begin{figure}
    \includegraphics[width=0.49\textwidth]{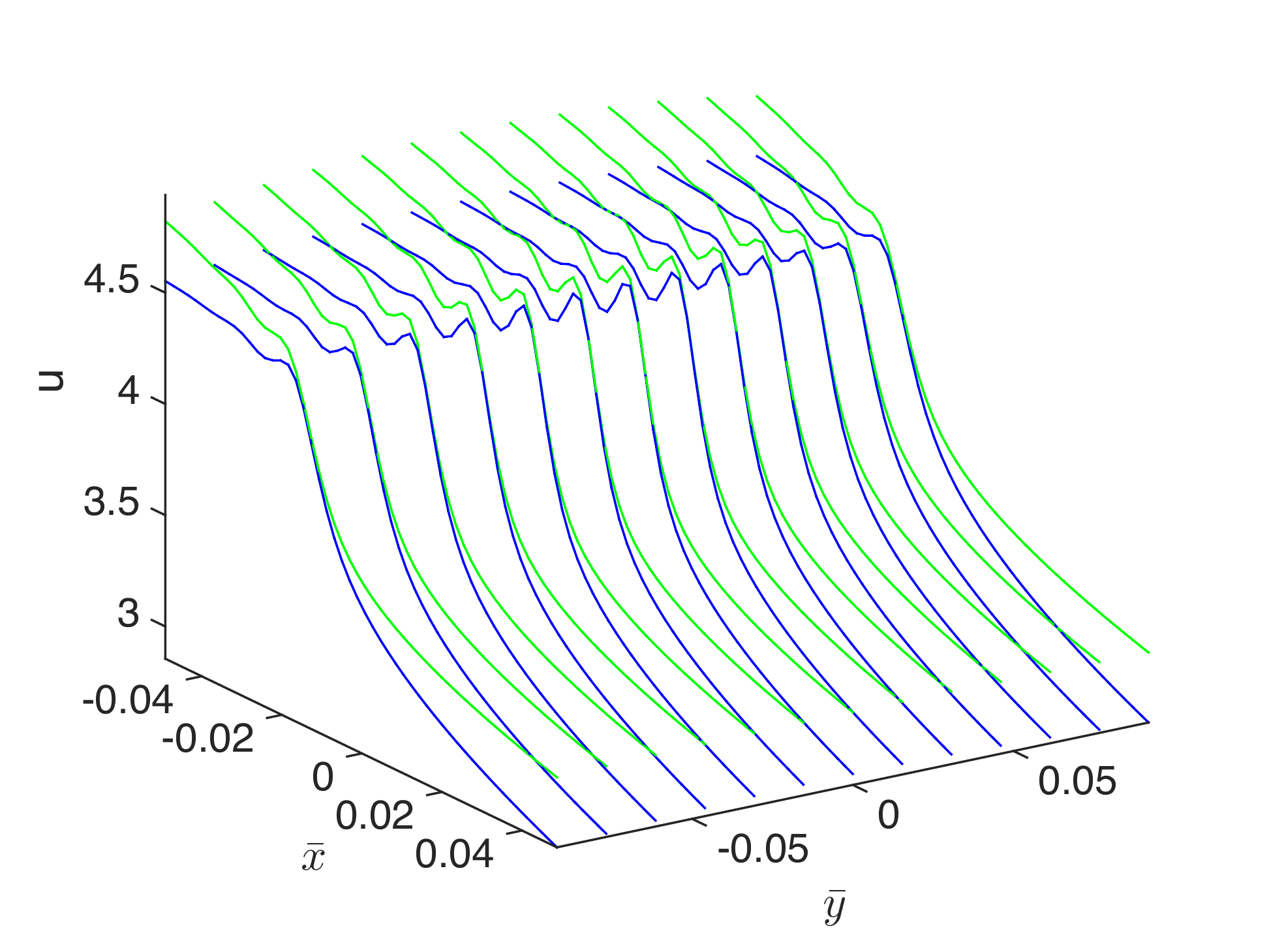}
 \includegraphics[width=0.49\textwidth]{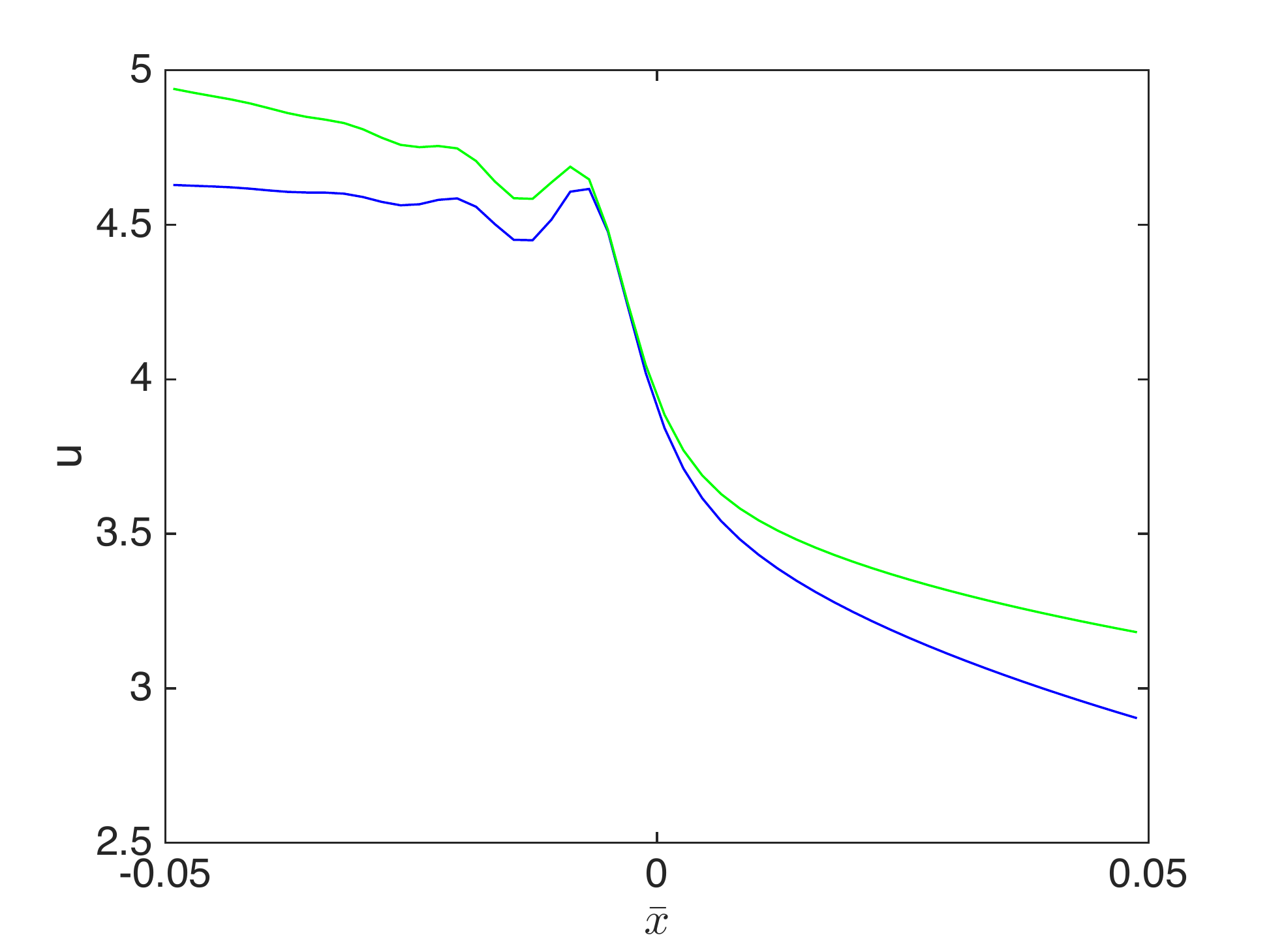}   
    \includegraphics[width=0.49\textwidth]{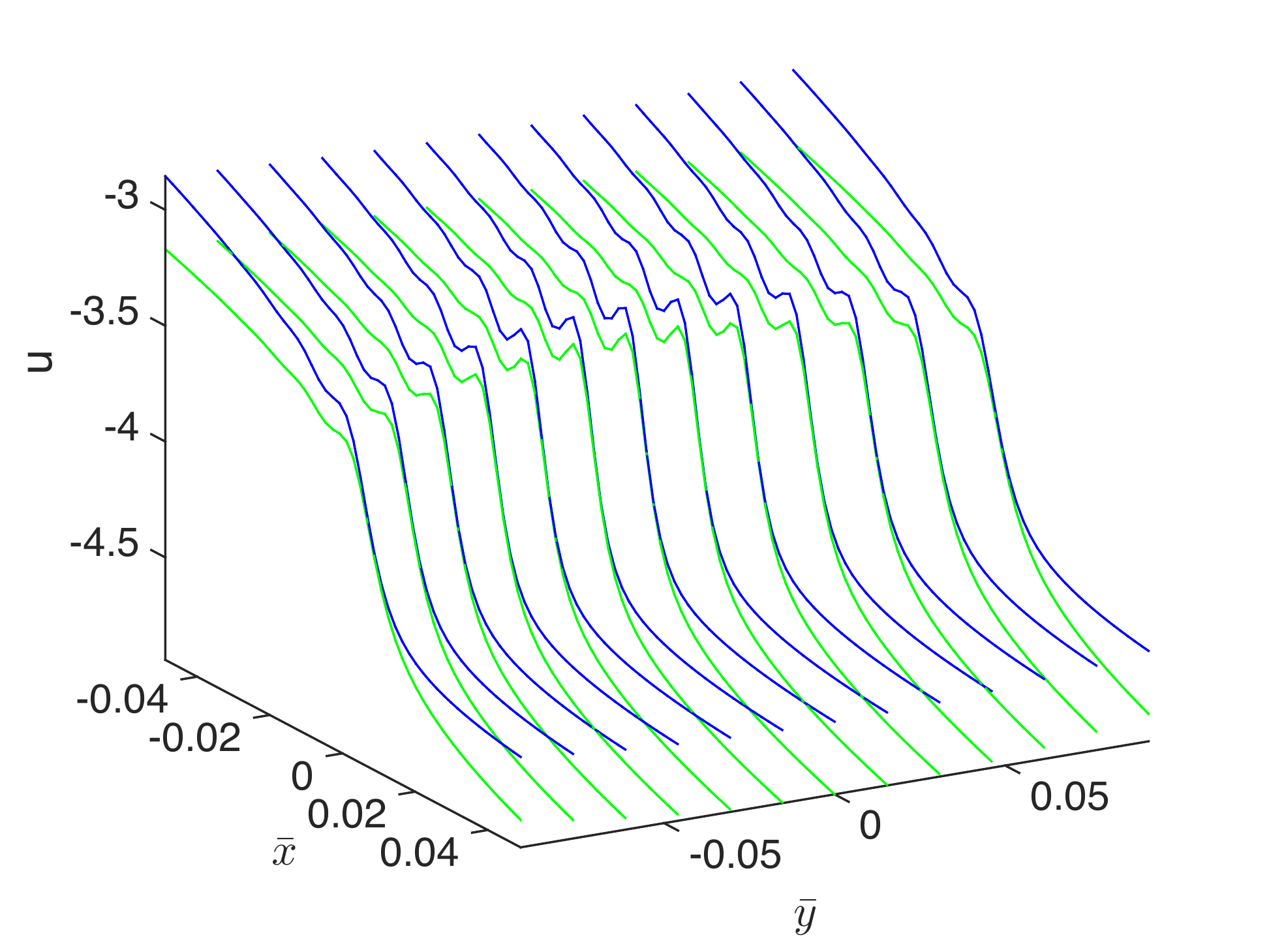}
 \includegraphics[width=0.49\textwidth]{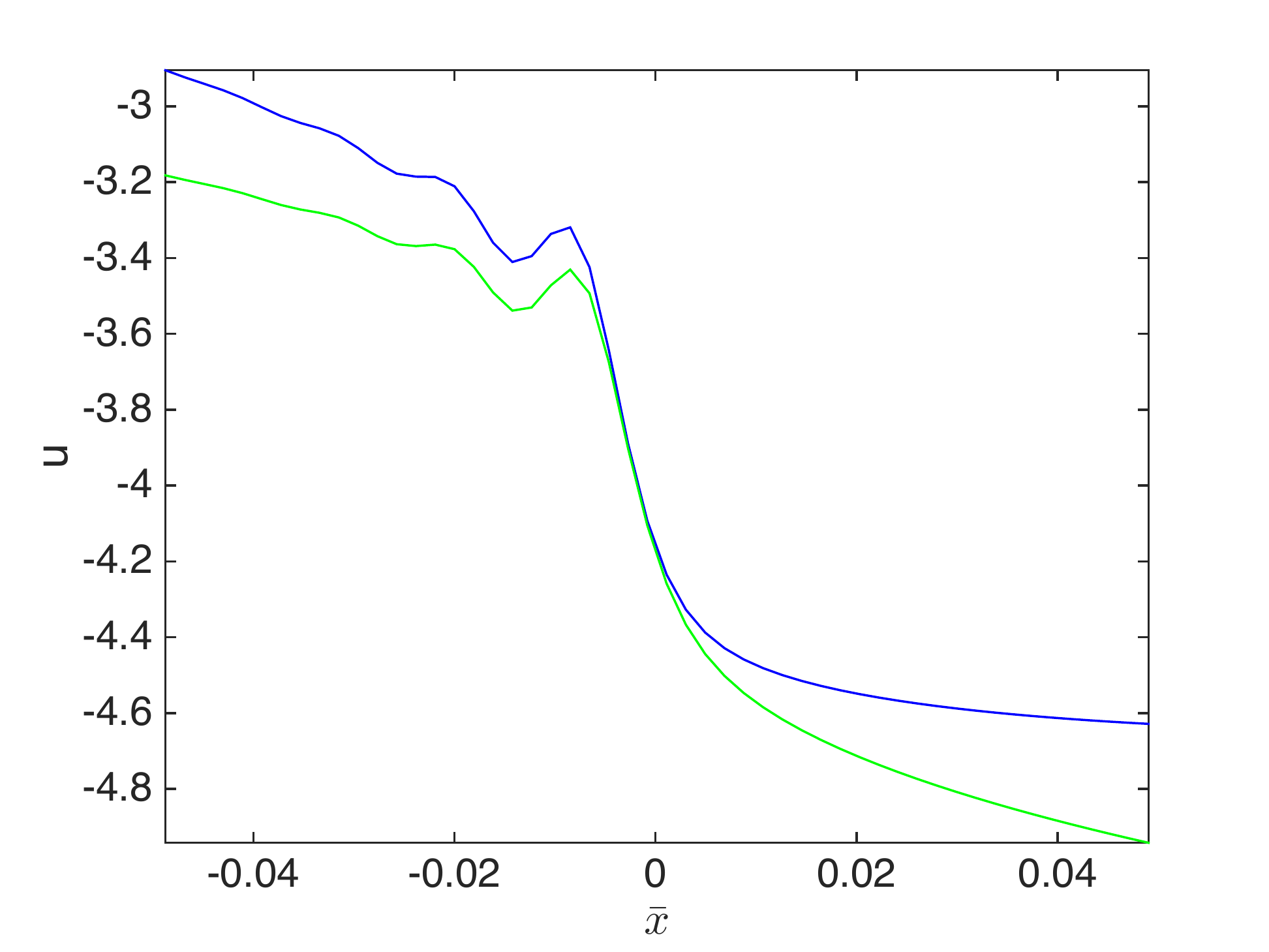}   
 \caption{Solutions to the generalized KP equations with $n=3$ and $\epsilon=0.01$ for the symmetric  initial data 
 (\ref{u0sym}) at the critical time $t_{c}\approx 0.0059$ near the 
 critical points $x_{c}$ and $y_{c}=0$ in blue and the 
 corresponding PI2 asymptotic solution (\ref{KP12}) in green; on the left 
 in dependence on $x$ and $y$, on the right on the $y$-axis, in the 
 upper row for KP I, in the lower row for KP II.   }
 \label{gKPIc}
\end{figure}

In Fig.~\ref{gKPIc} in the lower row 
we show the corresponding solution for the generalized KP 
II equation. The PI2 asymptotic description (\ref{KP12}) again matches 
the generalized KP II solution well in the vicinity of the critical point and 
catches qualitatively the first oscillation. The domain of 
applicability of the approximation is more confined than for KP II.

The solutions to the generalized KP I equation with $\epsilon=0.01$ and $n=3$ break 
for the asymmetric initial data (\ref{u0asym}) at $t_{c}\approx0.0028$ 
at $x_{c}\approx 0.2074$ and $y_{c}\approx 
   -0.0085$. The resulting solution at the critical time 
   can be seen in the vicinity of the critical point in 
   Fig.~\ref{gKPasymc} on the left. The PI2 asymptotic solution 
   (\ref{KP12}) describes the breaking well. The corresponding generalized dKP 
   II solution breaks at the same $t_{c}$ at $-x_{c}$, $y_{c}$. It 
   can be seen in Fig.~\ref{gKPasymc} on the right. 
\begin{figure}
    \includegraphics[width=0.49\textwidth]{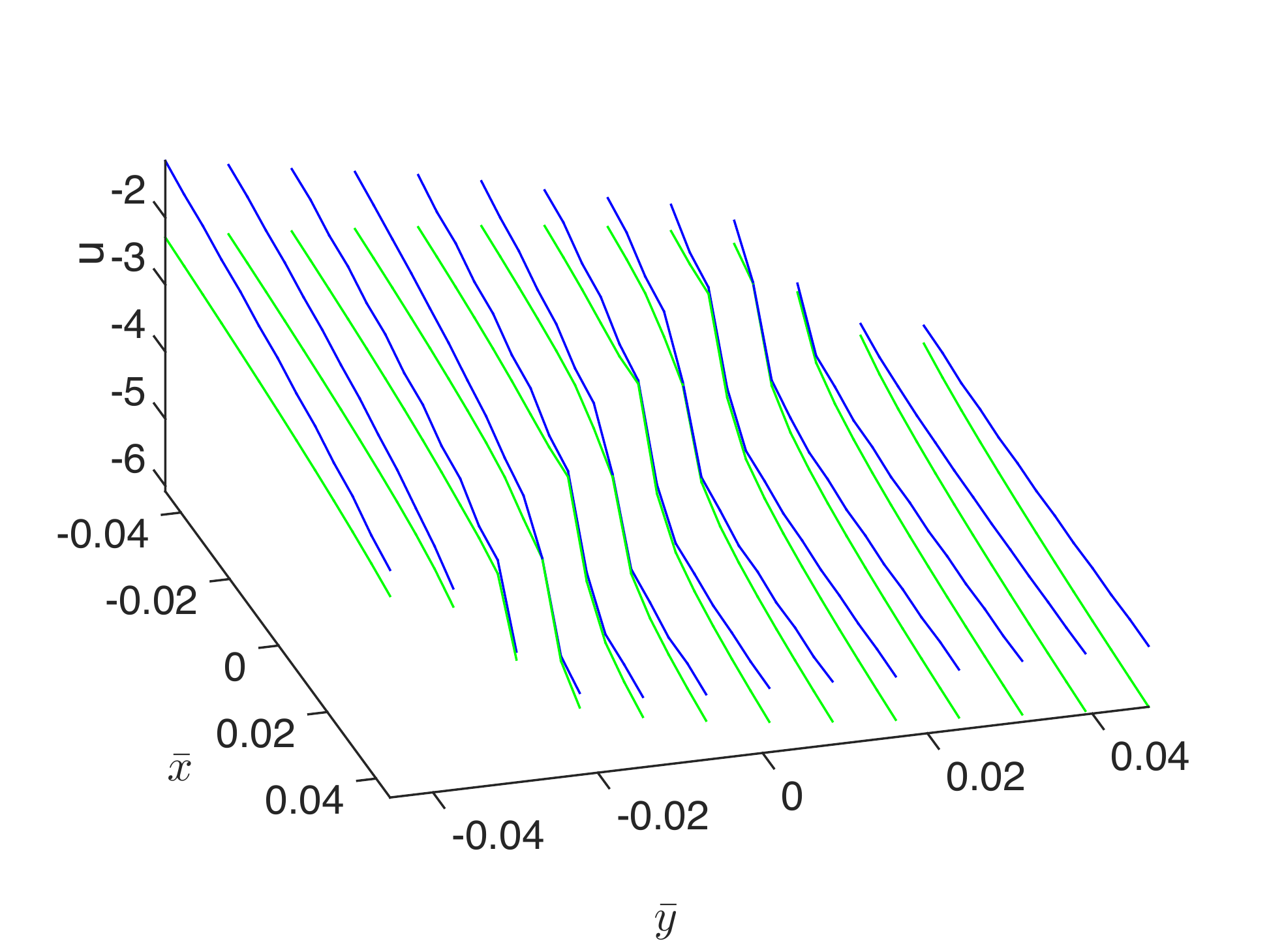}
 \includegraphics[width=0.49\textwidth]{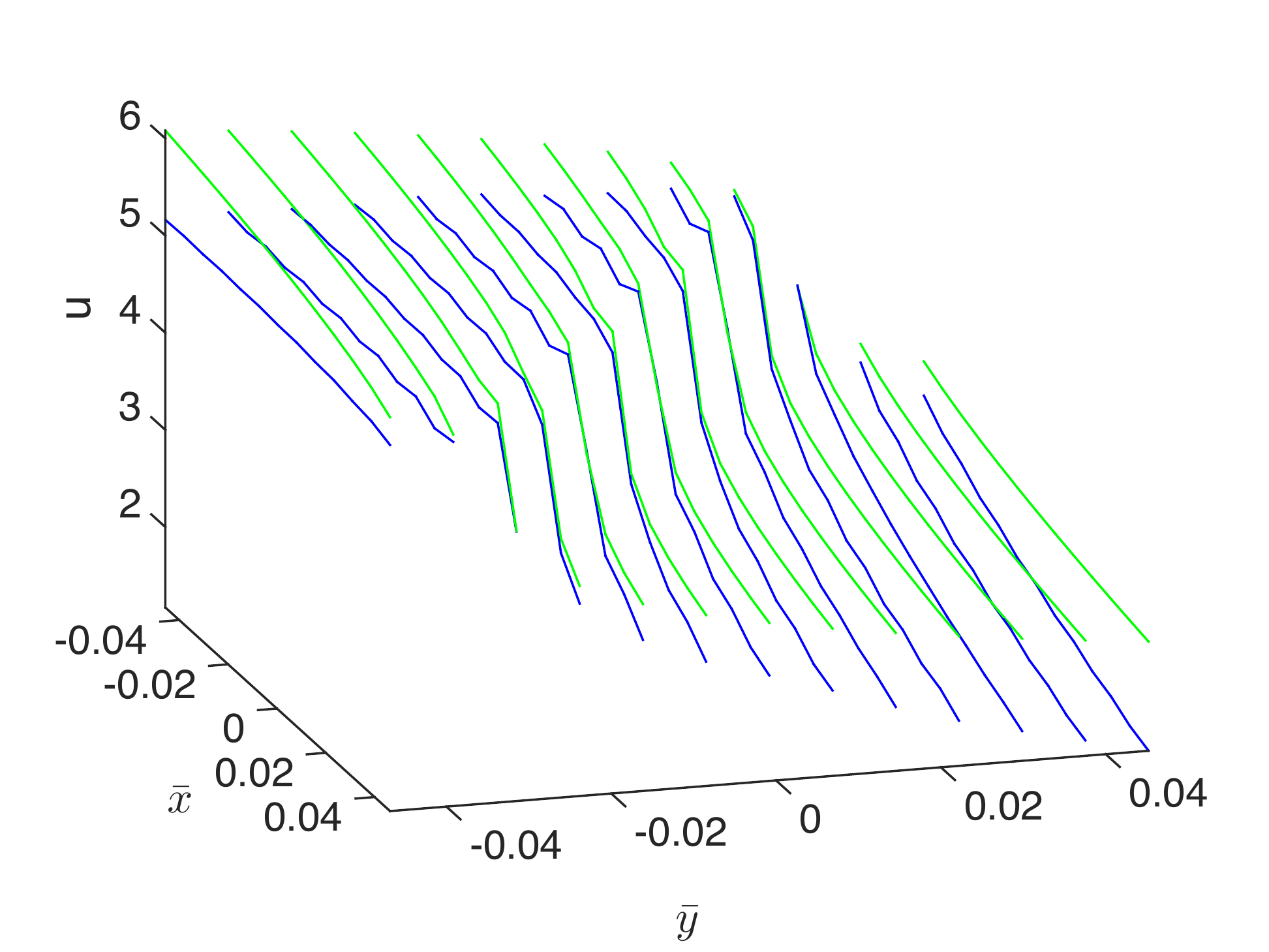}   
 \caption{Solutions to the generalized KP equations with $n=3$ and $\epsilon=0.01$ for the  asymmetric initial data 
 (\ref{u0asym}) at the critical time $t\approx 0.0028$ in the vicinity of the critical point
 in blue and the 
 corresponding PI2 asymptotic solution (\ref{KP12}) in green; on the left 
generalized KP I, on the right generalized KP II.   }
 \label{gKPasymc}
\end{figure}

\section{Outlook: Longtime behavior}
\label{outlook}
In this section we study the solutions of the previous sections for 
times $t\gg t_{c}$.  The solution of the  KP equations develops in 
the $(x,y)$ plane   a region of fast oscillations after the critical 
time $t_c$, the dispersive shock waves. 

The  generalized KP equation with $n\geq 4/3$ is, however, critical or
supercritical and its solution may develop   blow-up in finite time. In \cite{Saut93} and \cite{Liu}, it was shown 
that in generalized KP I solutions, a blow-up of the $L^{2}$ norm of $u_{y}$ can 
be observed for certain initial data. There are no rigorous analytic results 
on blow-up in the generalized KP II solutions so far. The first numerical studies of the generalized KP 
solutions appear to have been presented in \cite{WAS}, a more 
detailed study was performed in \cite{KP14}. The numerical results in 
\cite{KP14} led to the conjecture that there is an $L^{\infty}$ 
blow-up with a self similar blow-up profile for $n\geq 4/3$ for generalized KP I 
and for $n\geq 3$ for generalized KP II. In \cite{KP15}, blow-up in dispersive 
shocks for solutions to generalized KdV equations were studied 
numerically in the limit of small dispersion. These results are 
compared with the blow-up found below in solutions to generalized KP equations in 
the small dispersion limit. 

Since we did not have access to parallel computers for this work, we 
lack the necessary resolution to address these questions with the 
needed resolution (we use here $N_{x}=2^{13}$ and $N_{y}=2^{12}$ with 
$N_{t}=5000$ up to $N_{t}=20000$). The figures shown below have thus to be seen as 
indicative with the goal to outline directions of future research. 

\subsection{Solution to the KP equations for symmetric initial data}
In Fig.~\ref{KPct04a} we show  the dispersive shock waves that are formed in the KP I and KP II solutions   for the initial data (\ref{u0sym})  at $t=0.4>t_c$ i.e. well 
after both break-ups.
In Fig.~\ref{KPct04} we  zoom in  the oscillatory zones in the KP I  solution.
 The oscillations appear near the two critical 
points discussed in section \ref{sec:KP}.  It can be seen that the oscillations follow a 
parabolic pattern initially as suggested by the asymptotic formula 
(\ref{KP12}). But the focusing effect of the KP I equation appears to 
lead to the formation of a cusp. This is presumably a real effect and 
not related to a lack of resolution since a similar behavior was seen 
in \cite{KR13} where considerably higher resolution could be used. 
The analogy between KP and NLS suggests that this cusp  could be 
related to higher genus solutions appearing in the asymptotic 
description of the oscillations.  For times closer 
to the critical time, the PI2 asymptotic (\ref{KP12}) in the lower row 
of Fig.~\ref{KPct04} gives a 
qualitative approximation to the oscillations both for the parabolic 
shape and the number of oscillations. For larger times, the PI2 
asymptotic has considerably more oscillations. 

\begin{figure}
   \includegraphics[width=0.51\textwidth]{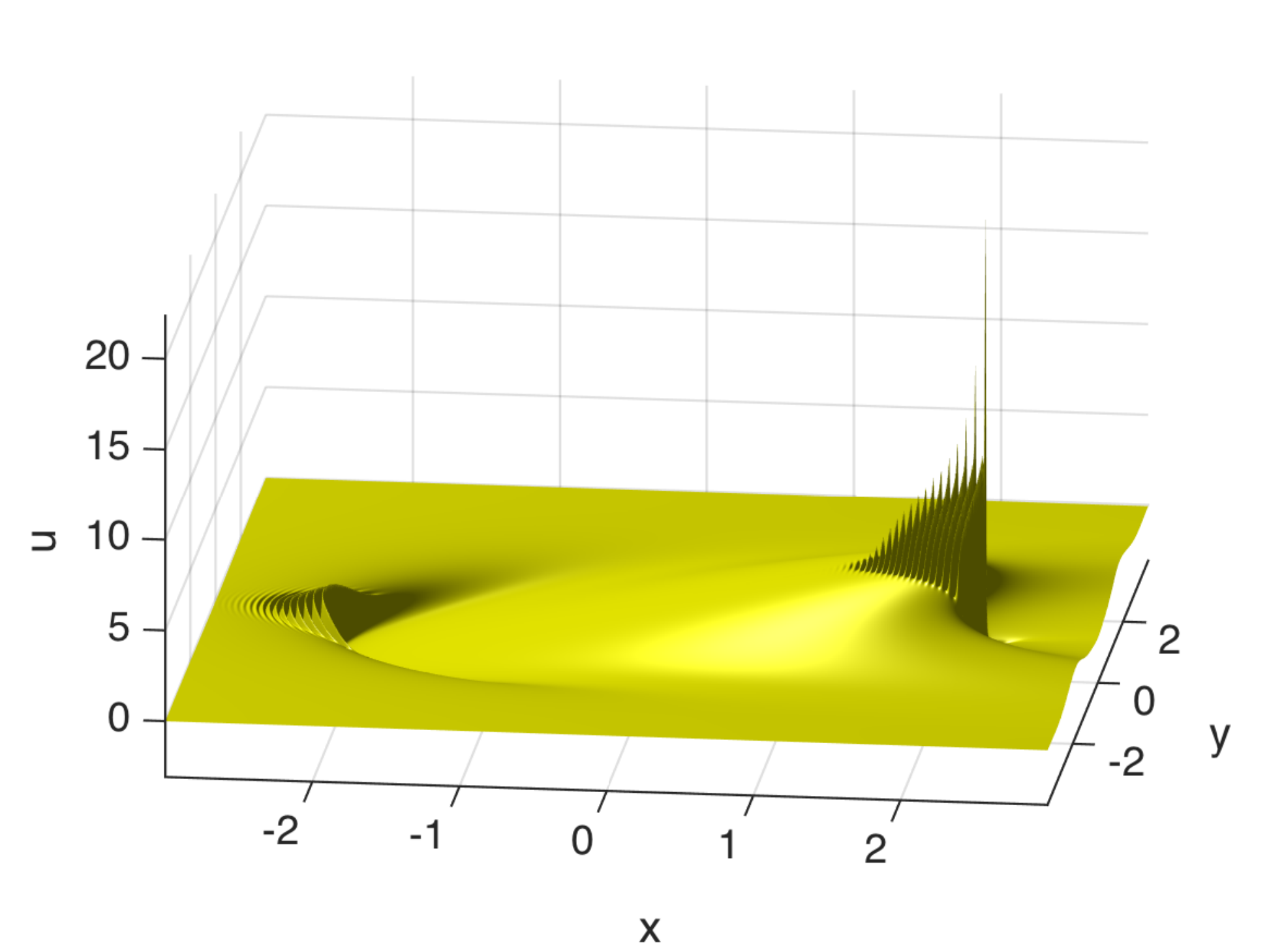}
 \includegraphics[width=0.51\textwidth]{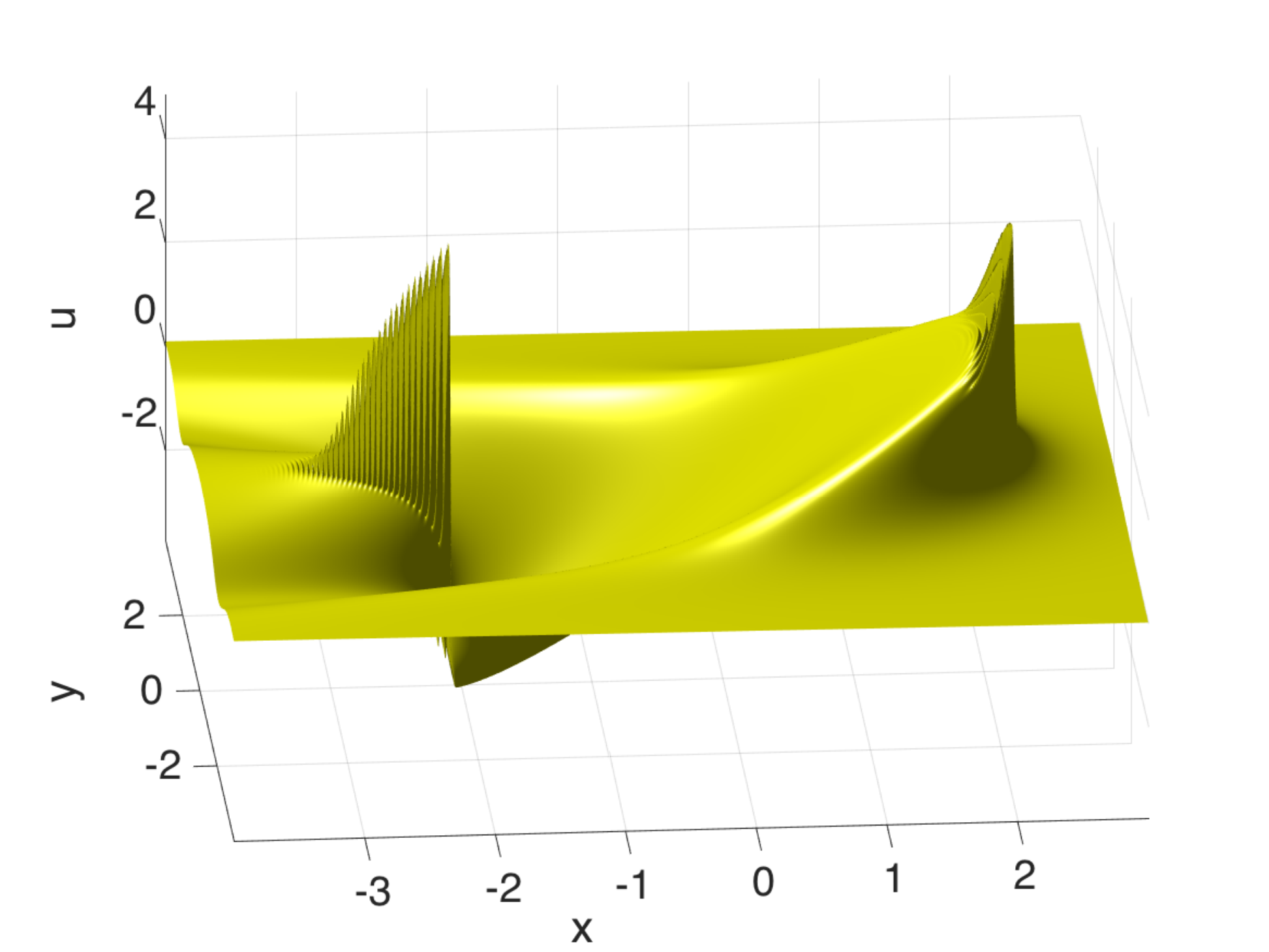}   
 \caption{Solution to the KP I  (left) and KPII (right)  equations with $\epsilon=0.01$ for the symmetric  initial data 
 (\ref{u0sym}) for $t=0.4>t_{c}\simeq 0.22$.  The KP I focusing effect can be observed from   the different scales in the $u$-axis of the two plots.  A zoom in   of these plots is shown in Fig.~\ref{KPct04} and Fig.\ref{KPIIct04} respectively.}
 \label{KPct04a}
\end{figure}

\begin{figure}
   \includegraphics[width=0.49\textwidth]{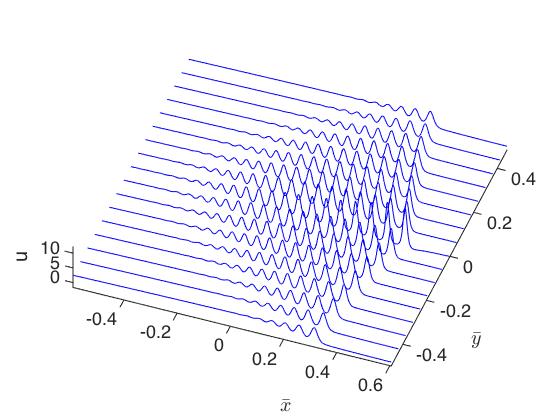}
 \includegraphics[width=0.49\textwidth]{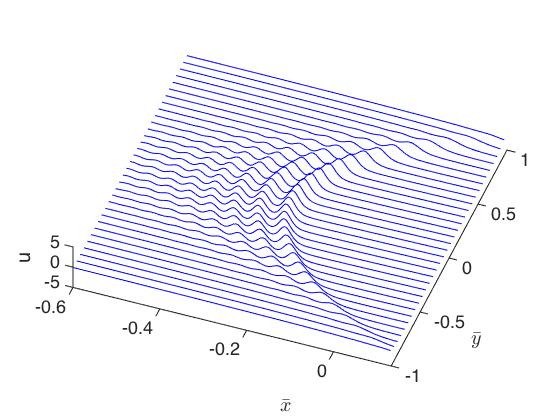}   
   \includegraphics[width=0.49\textwidth]{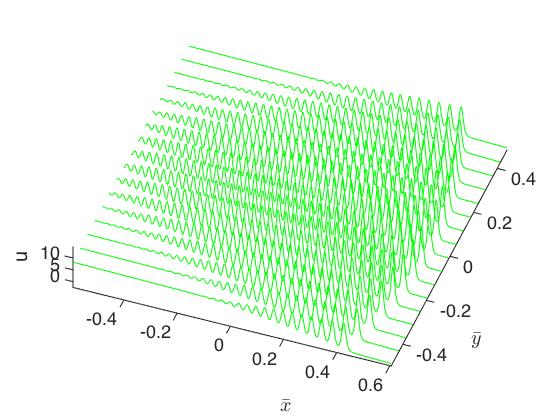}
  \includegraphics[width=0.49\textwidth]{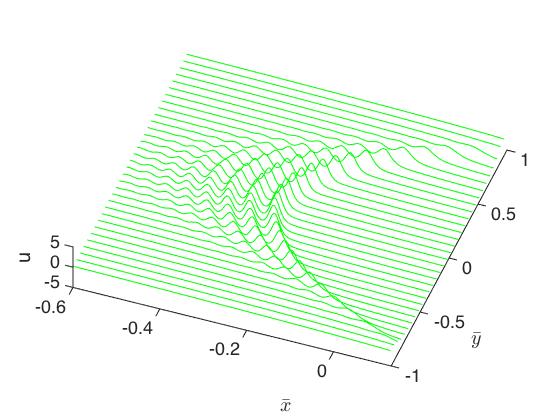}   
 \caption{Solution to the KP I equations with $\epsilon=0.01$ for the symmetric  initial data 
 (\ref{u0sym}) for $t=0.4>t_{c}\simeq 0.22$; in the upper row on the left the dispersive shock  wave
 appearing first, on the right the dispersive shock wave  appearing at a later time ($t_c\simeq 0.3$), in 
 the lower row the corresponding PI2 asymptotic solutions (\ref{KP12}).   }
 \label{KPct04}
\end{figure}

The corresponding KP II solution can be seen in Fig.~\ref{KPIIct04}. 
Due to the defocusing character of KP II, the oscillations  have 
a parabolic pattern at least in the range of times we considered. It is unclear 
whether some sort of cusp will appear at later times in the envelope of the oscillatory profile.
 In the lower row of Fig.~\ref{KPIIct04} we show the corresponding  PI2
asymptotic solution (\ref{KP12}) which shows a similar behavior as 
for KP I. 
\begin{figure}
    \includegraphics[width=0.49\textwidth]{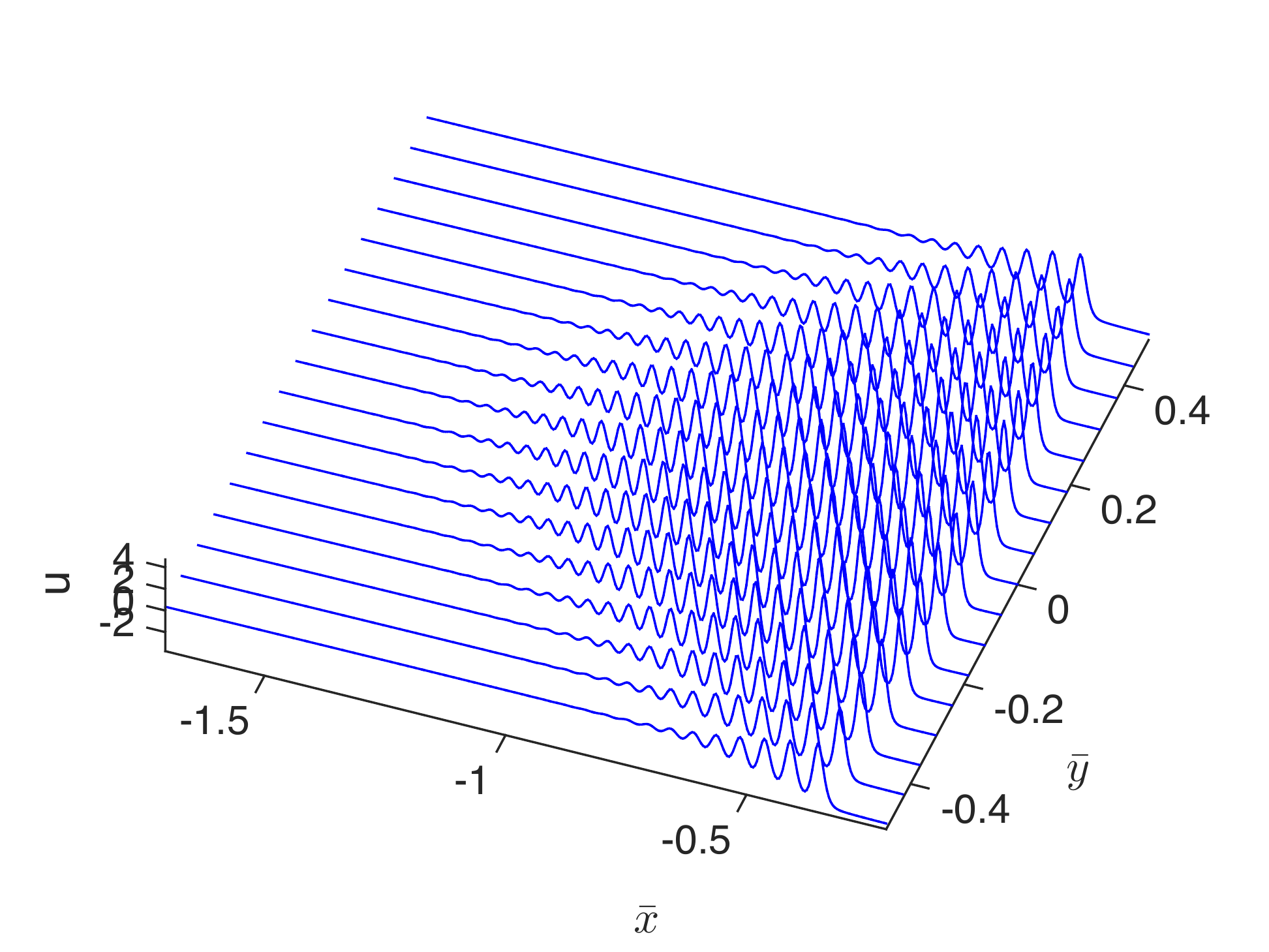}
 \includegraphics[width=0.49\textwidth]{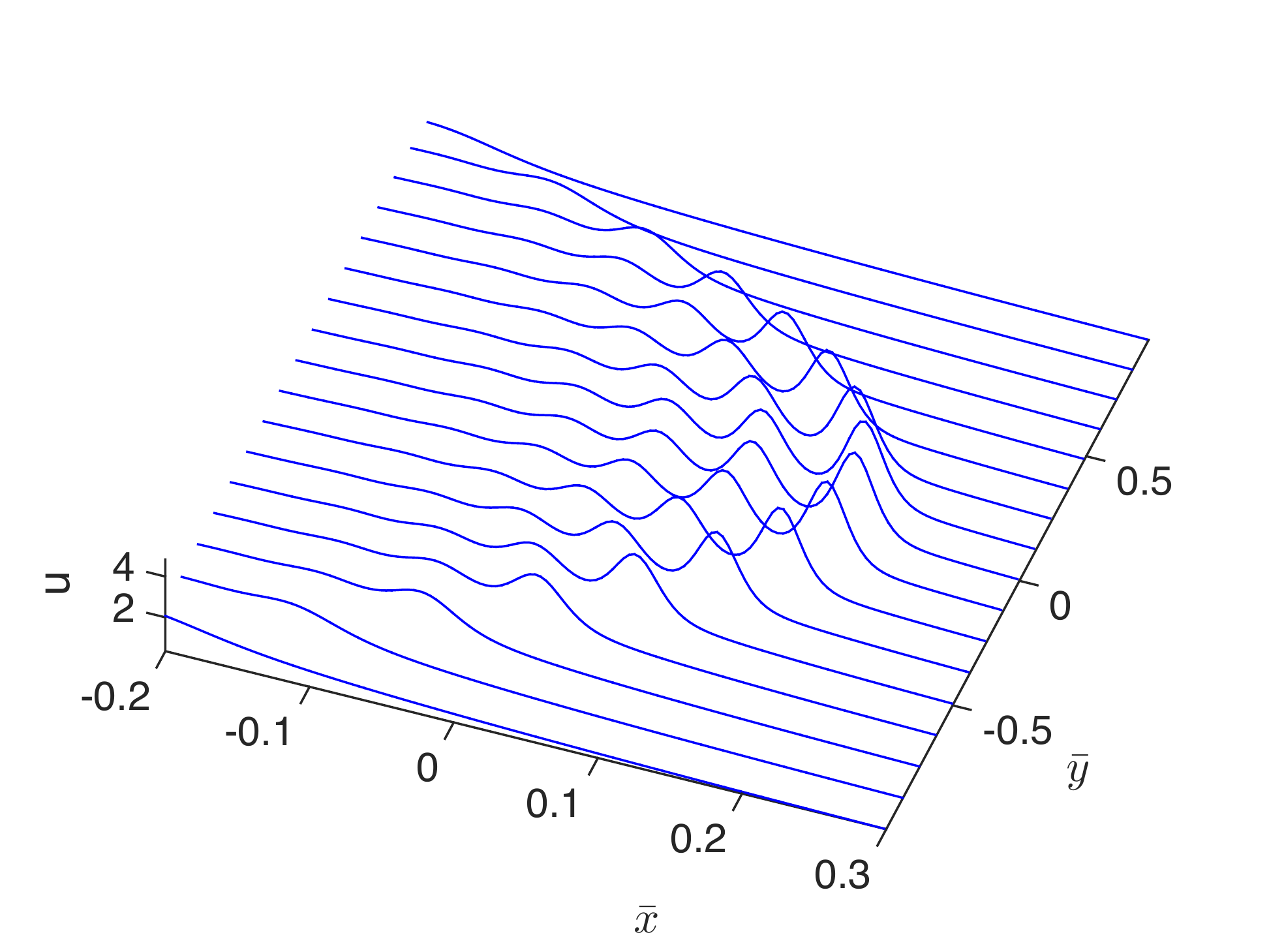}   
    \includegraphics[width=0.49\textwidth]{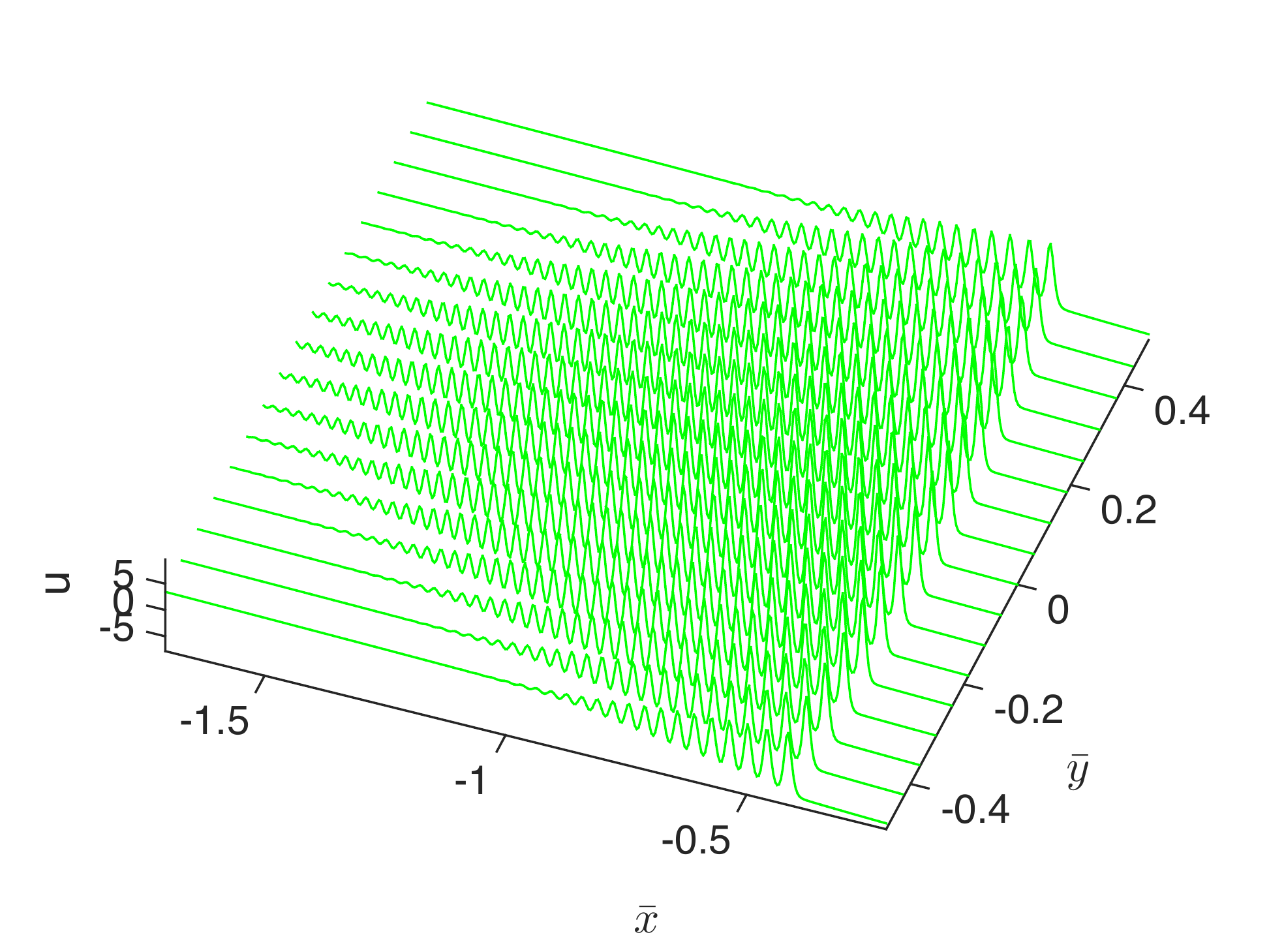}
 \includegraphics[width=0.49\textwidth]{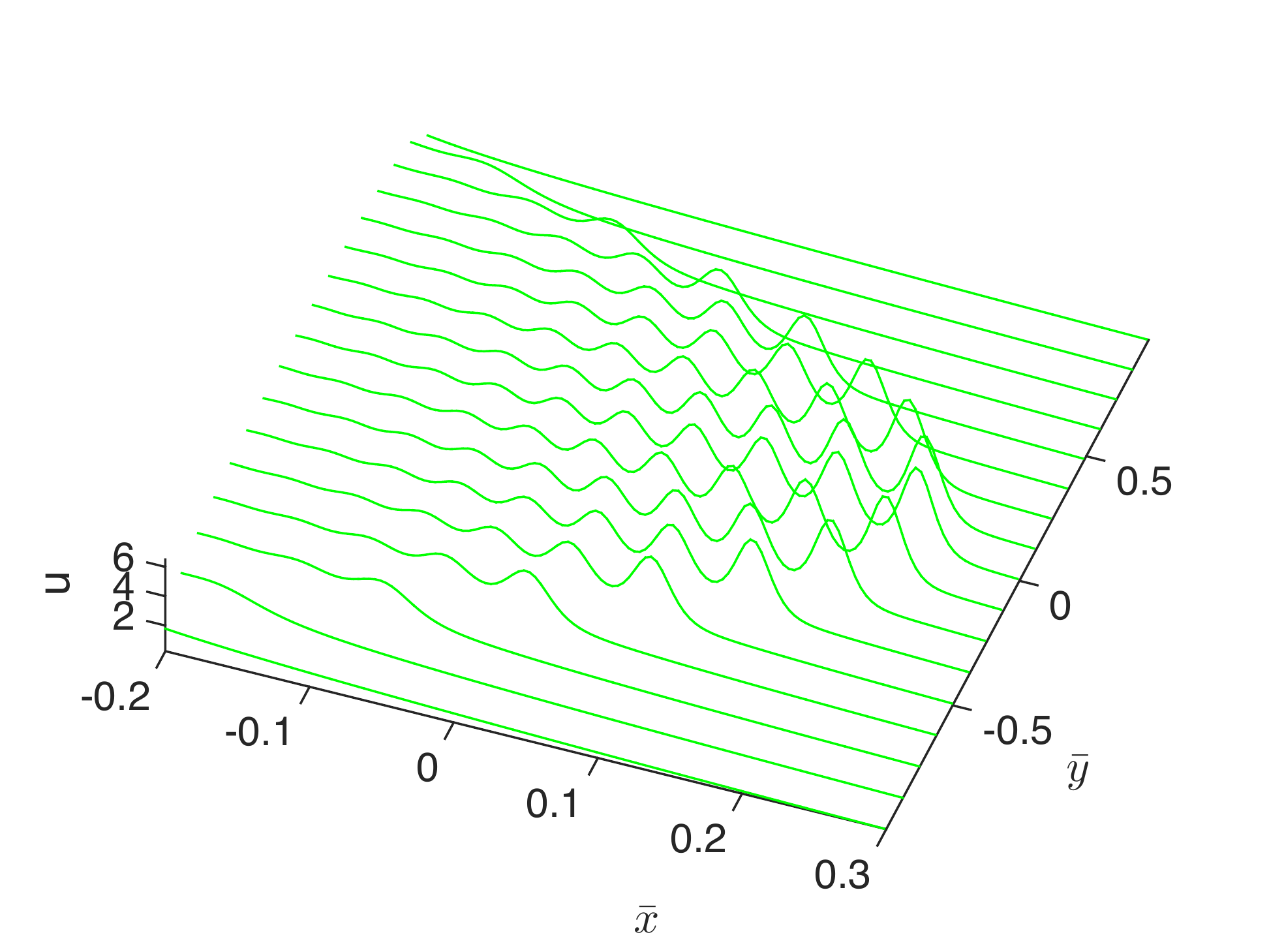}   
 \caption{Solution to the KP II  equation with $\epsilon=0.01$ for the symmetric  initial data 
 (\ref{u0sym}) for $t=0.4>t_{c}\simeq 0.22$;  in the upper row on the left the dispersive shock  wave 
 appearing first, on the right the dispersive  shock wave  appearing at a later time ($t_c\simeq 0.3$); 
 in the lower row the corresponding  PI2 asymptotic solutions (\ref{KP12}.  }
 \label{KPIIct04}
\end{figure}

The asymptotic description of these oscillation in the small 
dispersion limit will be the subject of further research. The task is  to derive and solve the 
Whitham equations which asymptotically describe the oscillations  and to find the needed phase 
information in the asymptotic solutions.

\subsection{Solution to generalized KP equations for symmetric initial data}
In \cite{KP14} blow-up in generalized KP solutions was studied numerically. It 
was conjectured that a self similar $L^{\infty}$ blow-up is observed.

We consider again the symmetric initial data (\ref{u0sym}) for generalized KP I 
with $\epsilon=0.01$ and let the code run with $N_{t}=20000$ and 
$N_{x}=2^{13}$, $N_{y}=2^{12}$. High resolution in $x$ is needed to 
resolve the dispersive shock as before, but an even higher 
resolution in $y$ would be necessary since the blow-up in $y$ 
according to \cite{KP14} is stronger in $y$ than in $x$ direction. 
Thus on the used computers, just 
indication for future work on larger computers can be obtained. 
For KP, the relative mass conservation drops below $10^{-3}$ at 
$t\approx 0.0085$, but the Fourier coefficients deteriorate even earlier. 
We show the generalized KP I solution for 
$t=0.00695$, i.e., at a time where the solution is still resolved,  on the 
left in Fig.~\ref{gKPbu}. It seems that the cusped structure in the 
oscillatory zone in Fig.~\ref{KPct04} blows up in the gKP I solution. 
\begin{figure}
    \includegraphics[width=0.49\textwidth]{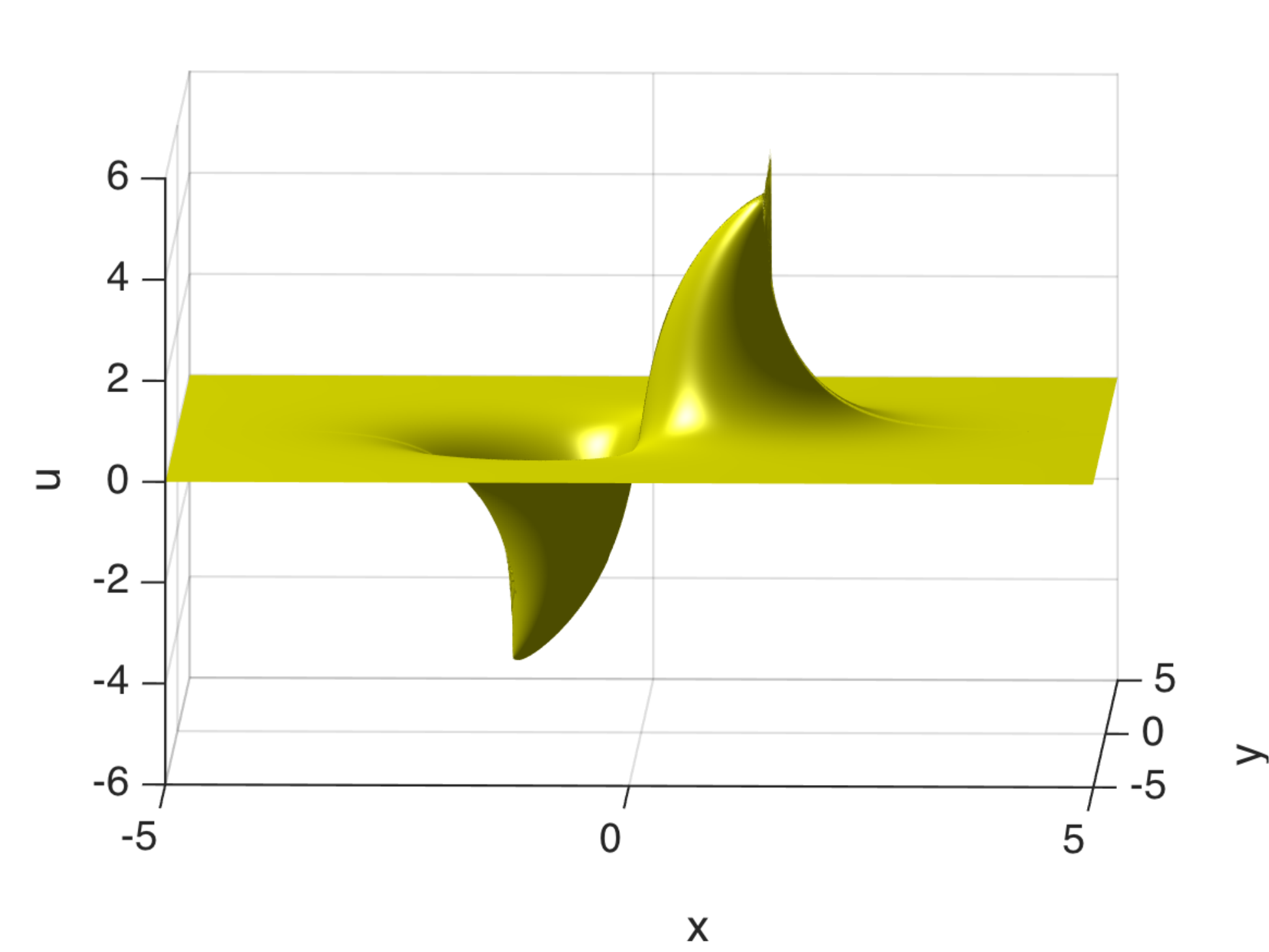}
 \includegraphics[width=0.49\textwidth]{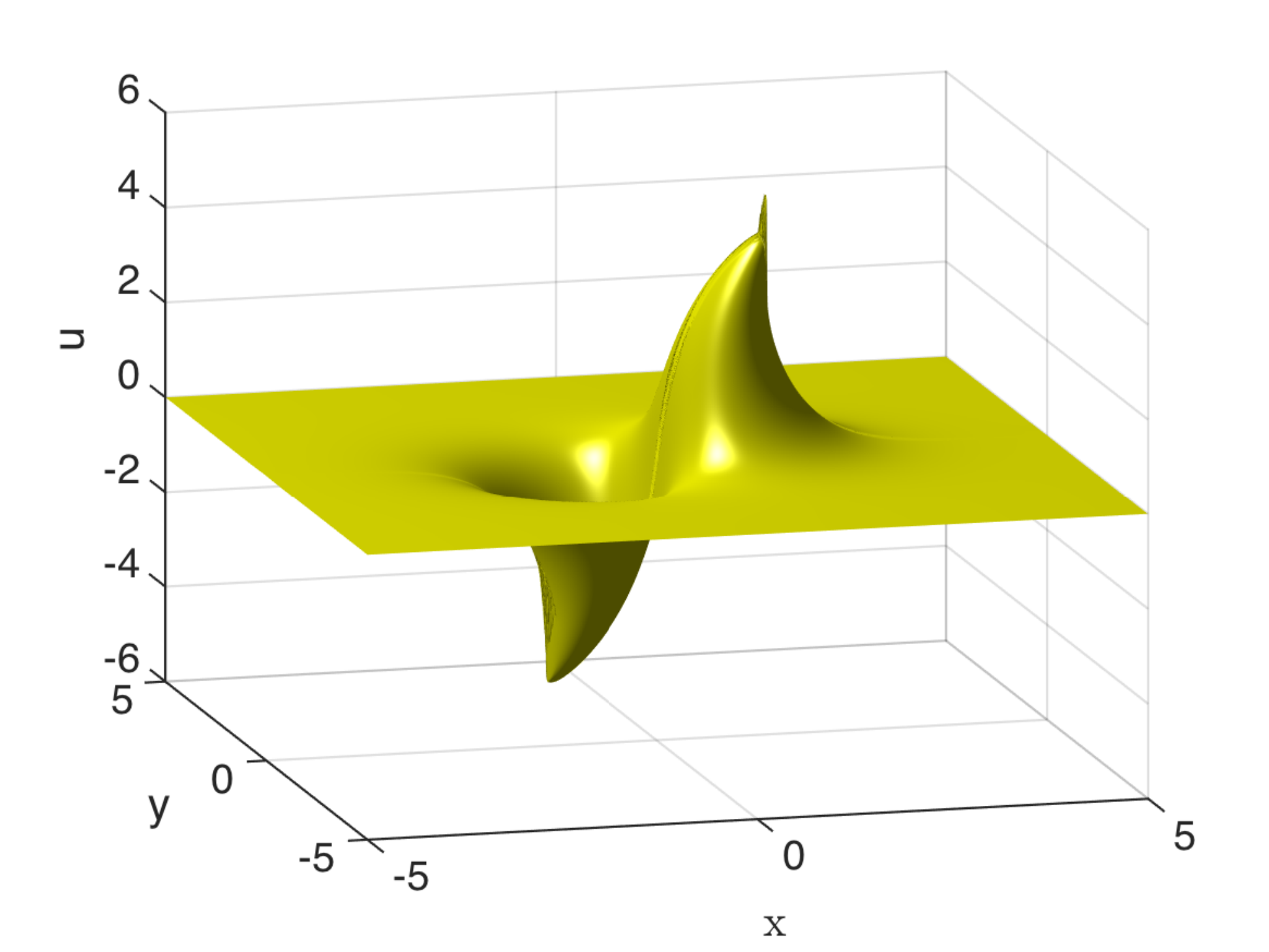}   
 \caption{Solution to generalized KP  equations with $\epsilon=0.01$ and $n=3$ 
 for the initial 
 data (\ref{u0sym}); on the left the generalized KP I solution for $t=0.00695$
 $L^{\infty}$ norm of $u$, on the right the generalized KP II solution for 
 $t=0.0071$.   }
 \label{gKPbu}
\end{figure}

In Fig.~\ref{gKPnorm} we show the $L^{\infty}$ norm of the solution 
which appears to explode. The 
fuzzy structure of the $L^{\infty}$ norm indicates a lack of 
resolution. In fact we did not manage to fit the norms to the 
asymptotic formulae of \cite{KP14} which means we do not get close 
enough to the actual blow-up. 
\begin{figure}
    \includegraphics[width=0.49\textwidth]{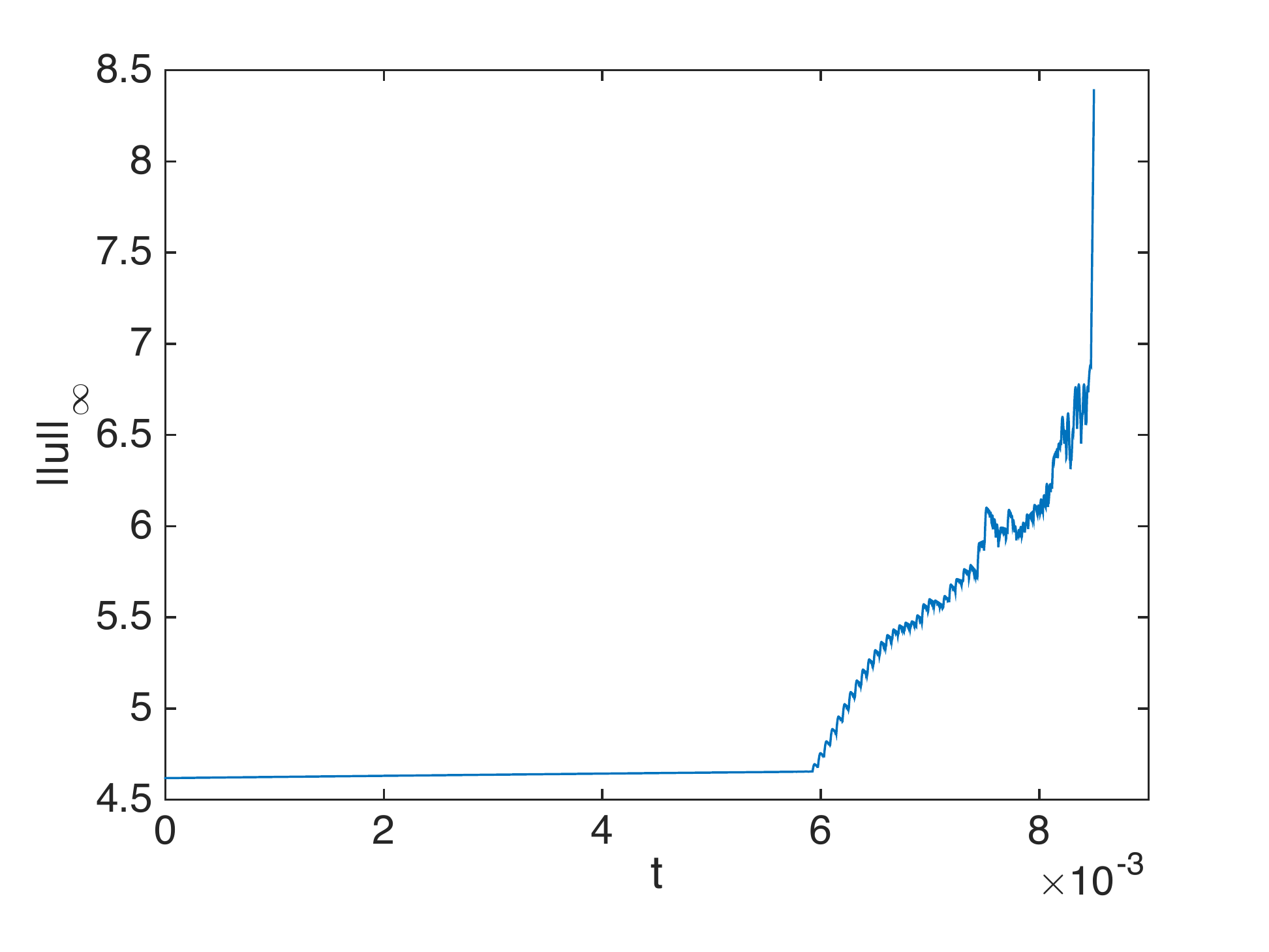}
    \includegraphics[width=0.49\textwidth]{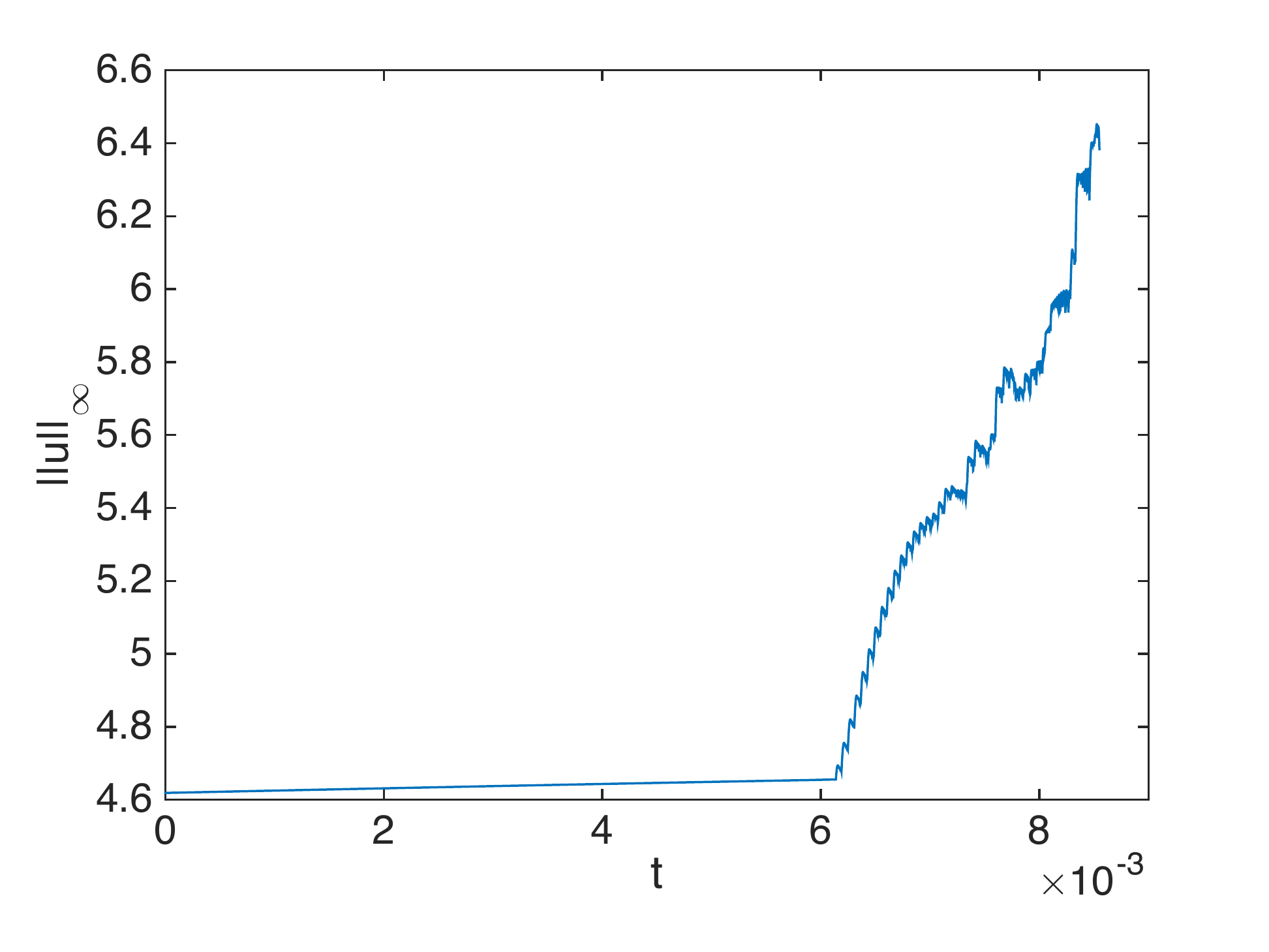}
 \caption{$L^{\infty}$ norm of the solution to generalized KP  
 equations with 
 $\epsilon=0.01$ and $n=3$ for the initial 
 data (\ref{u0sym}) in dependence of time; on the left for KP I, on 
 the right for KP II.   }
 \label{gKPnorm}
\end{figure}

The corresponding generalized KP II solution can be seen in Fig.~\ref{gKPbu} on 
the right. There appears to be some blow-up at a slightly later time which is 
in accordance with the defocusing character of generalized KP II. The blow-up is 
again in the region of positive values of $u$,  not on the 
side of the algebraic tails as for generalized KP I. The $L^{\infty}$ norm of the solution 
shown in  Fig.~\ref{gKPnorm}on the right  also indicates a blow-up. Again we do 
mot manage to fit the norms to the formulae of \cite{KP14} for lack of 
spatial resolution. 
It will be the subject of future work to study the details of the 
blow-up in the presence of a dispersive shock. For 
instance  for generalized KdV in \cite{KP15} the dependence of the blow-up 
time as a function of  $\epsilon$ was obtained.

\end{document}